\makeatletter\newcommand{\algmargin}{\the\ALG@thistlm}
\algnewcommand{\parState}[1]{
    \parbox[t]{\dimexpr\linewidth-\algmargin}{\strut\hangindent=\algorithmicindent \hangafter=1 #1\strut}}
\definecolor{xlightgray}{HTML}{D1DBE4}
\definecolor{xgreen}{HTML}{B9C89F}
\definecolor{xred}{HTML}{F19F9E}
\definecolor{xyellow}{HTML}{F6DA71}
\definecolor{xblue}{HTML}{ACE7EF}
\definecolor{xpurple}{HTML}{C3A4E7}
\newcommand{\sysname}{\textsc{Mangrove}\xspace}
\crefname{ALC@line}{line}{lines}
\Crefname{ALC@line}{Line}{Lines}
\title{\sysname: Fast and Parallelizable State Replication for Blockchains}
\author{Anton Paramonov}{ETH Zurich}{aparamonov@ethz.ch}{}{}
\author{Yann Vonlanthen}{ETH Zurich}{yvonlanthen@ethz.ch}{}{}
\author{Quentin Kniep}{ETH Zurich}{qkniep@ethz.ch}{}{}
\author{Jakub Sliwinski}{ETH Zurich}{jsliwinski@ethz.ch}{}{}
\author{Roger Wattenhofer}{ETH Zurich}{wattenhofer@ethz.ch}{}{}
\authorrunning{A. Paramonov, Y. Vonlanthen, Q. Kniep, J. Sliwinski, R. Wattenhofer} 
\keywords{Blockchain, Parallelization, Low-latency, Consensus}
\begin{document}

\maketitle

\begin{abstract}
  \sysname is a novel scaling approach to building blockchains with parallel smart contract support.
    Unlike in monolithic blockchains, where a single consensus mechanism determines a strict total order over all transactions,
    \sysname uses separate consensus instances per smart contract, without a global order.
    To allow multiple instances to run in parallel while ensuring that no conflicting transactions are committed, we propose a mechanism called Parallel Optimistic Agreement.
    \sysname is optimized for performance under \emph{optimistic} conditions, where there is no misbehavior and the network is synchronous.
    Under these conditions, our protocol can achieve the latency of 2 communication steps between creating and executing a transaction.
\end{abstract}

\section{Introduction}
\label{sec:introduction}

Scalability remains a major challenge for blockchain systems. Arguably, no single blockchain currently offers sufficient throughput to support traditional Web 2.0 applications in a trustless manner~\cite{dfinity_forum_post}. In the modern blockchain landscape, users are fragmented across numerous Layer-1, Layer-2, and even Layer-3 chains. This fragmentation raises concerns about the interoperability and latency of decentralized applications built on top of these chains.

Techniques like sharding, where network nodes are divided into groups to process transactions in parallel~\cite{rapidchain,adhikari2024fast}, have gained considerable attention as potential solutions to the problem.
Although sharding has been proven to enhance system performance, the techniques come with drawbacks and limitations.
Specifically, sharded systems experience significant latency (or abortion rate) when dealing with smart contracts with high contention. Furthermore, they require cross-shard agreement for transactions spanning multiple shards.

Since it has been established that consensus is not necessary for applications like payments \cite{gupta,consensus_number}, the research on consensus-less payment systems \cite{fastpay} has offered a remarkably simple alternative solution to the problem of scaling. Foregoing consensus, these systems offer a model in which every validator can parallelize processing and execution of all transactions without limitation.
In contrast, sharding protocols assume a rigid division of validators that complicates the system and limits the potential for parallelizability.
However, consensus-less systems can support only a limited range of applications, as consensus is necessary for general smart contracts.
\subparagraph*{Our Contributions.}
We address the following research question:
\begin{quote}
``Can a protocol supporting consensus exhibit the advantages of consensus-less payment systems?''
\end{quote}
and answer in the positive. We summarize our contributions in the following.

\begin{itemize}
    \item We propose the Replicated Actor Model, a novel execution model for blockchain systems based on externally owned accounts (user actors) and smart contracts (reactive actors), that makes dependencies explicit with parallelizability in mind. 
    
    \item We introduce Parallelizable Optimistic Agreement (POA), a consensus primitive that can be used to achieve consensus on the stream of incoming transactions for individual smart contracts. POA instances are designed to run in parallel, while ensuring that conflicting (i.e., double-spending) transactions cannot be committed. 
    
    
    \item The resulting system, called \sysname, combines limitless parallelization, low latency, and support for general smart contracts. In \sysname, no single validator can delay the entire system's progress, and congestion at one smart contract does not impede the rest of the system. In other words, the transaction throughput of different smart contracts can be scaled horizontally by validators. Additionally, our system achieves optimal latency in optimistic conditions, where users or validators do not misbehave, and the network is synchronous. Under these conditions, a block producer can commit a transaction to a smart contract in \emph{two} communication steps.
\end{itemize}

\section{Related Work}
\label{sec:related-work}

\subparagraph*{Consensusless Systems.}

Blockchain-based systems use consensus mechanisms as their core building block.
However, consensus tolerating Byzantine faults~\cite{byzantine_generals} is inherently slow. First blockchain systems such as Bitcoin~\cite{nakamoto2008bitcoin} and Ethereum~\cite{ethereum} are notorious for their limited performance.

However, it has been established that consensus is not necessary for many applications, such as payments~\cite{gupta,consensus_number}. Designs such as Fastpay~\cite{fastpay}, Astro~\cite{astro}, and Accept~\cite{accept} propose remarkably simple solutions that inherently parallelize the workload. Validators in these systems can easily add computational resources to process more transactions. Tonkikh et al.~\cite{cryptoconcurrency} and Bazzi et al.~\cite{bazzi2024fractional} show that even dependencies between transactions of the same issuer can be resolved without consensus. 
Other systems, such as Groundhog~\cite{groundhog}, Setchain~\cite{capretto2022setchain}, and Pod~\cite{alpos2025pod} avoid consensus by using commutative semantics. Frey et al.~\cite{frey2024process} and Sridhar et al.~\cite{sridhar2025stingray} recently introduced new consensus-free objects, inspired by Byzantine fault-tolerant CRDTs~\cite{kleppmann2022making}. \sysname is orthogonal to these efforts, as it focuses on the interplay between all types of objects (with- or without consensus). 
Albouy et al.~\cite{albouy2024asynchronous} show how a consensusless system can also provide anonymity properties in a lightweight fashion.
%

Despite their numerous advantages, consensusless systems are inherently unsuitable for many applications, as general smart contracts require consensus~\cite{alpos2021synchronization}. The CoD~\cite{consensus_on_demand} primitive alleviates this problem to a limited extent, by mixing payment system-like logic with consensus as fallback. In turn, it exhibits the problems of consensus systems, like poor parallelizability.

Sui~\cite{sui_lutris,cuttlefish} is a modern blockchain system that recognizes the mentioned problems and incorporates a consensusless component in its design. Sui relies on consensus only for complex tasks, like ordering accesses to the same shared objects. We further increase scalability, by allowing parallelism even for shared objects. In addition, \sysname outperforms Sui in the number of communication rounds needed for transaction execution.
Our proposed system shares many characteristics with Basil~\cite{suri2021basil}, namely parallel execution of non-conflicting transactions and fast commits under optimistic conditions. Contrary to Basil, in \sysname users do not have to drive progress themselves and benefit from lower latency.



\subparagraph*{Fast Byzantine Consensus.}

Martin and Alvisi~\cite{fast_byzantine_consensus} present an algorithm that achieves Byzantine Consensus in just two communication steps under optimistic conditions, specifically assuming an honest leader and a synchronous network.
They introduce a parameter $p \leq f$, which represents the number of failures supported by the fast path and show a resilience bound of $n \geq 3f + 2p + 1$. Subsequent work explores the potential and pitfalls of fast consensus, both for single-shot consensus and state machine replication~\cite{song2008bosco,abraham2018revisiting,gueta2019sbft}.
Recently, Kuznetsov et al.~\cite{revisiting_optimal_resilience} and Abraham et al.~\cite{good_case_latency} revisited this topic, pointing out that for the category of protocols where the set of proposers is a subset of the validators, a lower resilience bound of $n \geq max(3f + 1, 3f + 2p - 1)$ can be achieved.

In contrast to prior \cite{revisiting_optimal_resilience} that limits the set of proposers to a subset of validators, our protocol allows all users to act as proposers. This generality requires us to meet the more stringent resilience bound of $n \geq 3f + 2p + 1$ by Martin and Alvisi. Our protocol, \sysname, matches this bound and thus achieves optimal resilience in this setting.


Finally, Flutter~\cite{monti2024fast} avoids the communication time of sending transactions to a leader, by providing a leaderless protocol, at the cost of requiring $5f + 1$ replicas. In Flutter, clients attach bets to their transactions, which serve as timestamps to establish a total order. \sysname\ is leaderless for simple UA-UA transactions, but makes use of leaders to simplify how contention is handled at reactive actors.

\subparagraph*{Parallel Execution.}

Parallelization is a natural way to improve scalability and can be applied at various levels in blockchains.
Parallel execution~\cite{block_stm,anjana2019efficient,dickerson2017adding,neiheiser2024pythia} aims to accelerate the local execution of transactions by the validator across cores, while distributed execution~\cite{pilotfish} leverages multiple machines per validator.
Both approaches concentrate on improving local execution, whereas our work targets the elimination of the bottleneck of the single agreement mechanism.

\subparagraph*{Sharding.}

Sharding~\cite{lockless_sharding,sharper,dang2019towards,adhikari2024fast} is a technique of splitting the system state among disjoint groups of validators called shards.
As shown in~\cite{adhikari2024fast}, while sharding is efficient when a transaction only accesses a part of the system within one shard, it can result in high latency or abortion rate~\cite{chainspace} for transactions that span multiple shards, especially for highly contested actors. 
In \sysname, ``popular'' actors do not slow the progress of the whole system.
Instead, actors progress independently, ensuring that the system's overall performance remains unaffected by the contention of individual actors.

\section{Replicated Actor Model}
\label{sec:model}

Today, it is common for blockchains to define a total order over all transactions~\cite{nakamoto2008bitcoin,ethereum}.
Thus, the ensuing sequential and atomic execution of transaction bundles is often the only considered execution model. The obtained atomic composability property allows for (potentially counterintuitive) applications such as flash loans~\cite{qin2021attacking}. 

In this work, we challenge the status quo of this execution model.
To this end, we define the replicated actor model, which foregoes global sequential ordering and is more suited for parallelism.
In \cref{sec:expressiveness} we discuss the model's expressiveness and even propose an extension, which optionally allows for the reintroduction of (targeted) atomic composability.

Our model is closely related to the object model of Sui~\cite{sui_lutris}. We differentiate between the following four types of components.

\subparagraph*{Actors.}
A \textit{user actor} is associated with a digital signature key pair.
We say that a key pair (user) \emph{controls} a user actor.
\textit{Reactive actors} are analogous to smart contracts and can be thought of as a Turing machine with arbitrary state that can be changed via computations.
Actors can emit new transactions.

\subparagraph*{Objects.}
\emph{Owned objects} are objects owned by some actor, e.g. gas, tokens, or NFTs.
Every type of owned object is associated with a set of actions that can be performed over it.
These actions are specified in a global \emph{read-only object} containing the type definition.
For example, for gas objects or tokens, those actions may include splitting and merging.
Ownership of objects can be transferred.
Assume actor $A$ owns a gas object $O$ worth 10 coins and wants to transfer 2 coins to actor $B$.
Then $A$ might perform \texttt{split([8, 2])} on $O$ to receive two objects worth 8 and 2 coins respectively, and transfer the latter to $B$.

\subparagraph*{Users.} A \textit{user} is an external entity associated with a key pair,
who interacts with the system by creating and giving instructions to actors through their user actors.

\subparagraph*{Validators.} The \textit{validators} are the network participants in charge of running \sysname.
Validators participate in the broadcast and consensus algorithms and are responsible for keeping a consistent state of the system by maintaining the ownership records of each owned object and the state of actors.

\subsection{Validators}
\label{sec:validators}

We consider a set of $n$ validators, 
denoted by $\mathcal{V}$, which we assume to be known to all users and validators.
We require that at most $f$ of them are Byzantine~\cite{byzantine_generals}. We call non-Byzantine validators \emph{honest}. 
Additionally, under optimistic conditions and when less than $p$ validators are Byzantine, a transaction can be committed in two communication steps. \sysname\ assumes the participation of $n \geq 3f + 2p + 1$ validators.

Each validator maintains a dedicated state associated with each actor in the system which we call an \emph{entity}. We denote an entity of validator $V$ corresponding to an actor $A$ with $V.A$.
Different entities may be located on different machines at the validator's discretion and can communicate with each other.

Entities of different validators communicate via Outer Links, and entities within a validator communicate via Inner Links. The Inner and Outer Links implement Perfect Links~\cite{cachin2011introduction} and expose the following interface:
\begin{interface}
    - \texttt{function} $Send(m, A)$: sends message $m$ to $A$\\
    - \texttt{callback} $Deliver(m, A)$: fired upon receiving message $m$ from $A$
\end{interface}

\begin{figure}[!hb]
    \centering
    \includegraphics[width=0.99\linewidth]{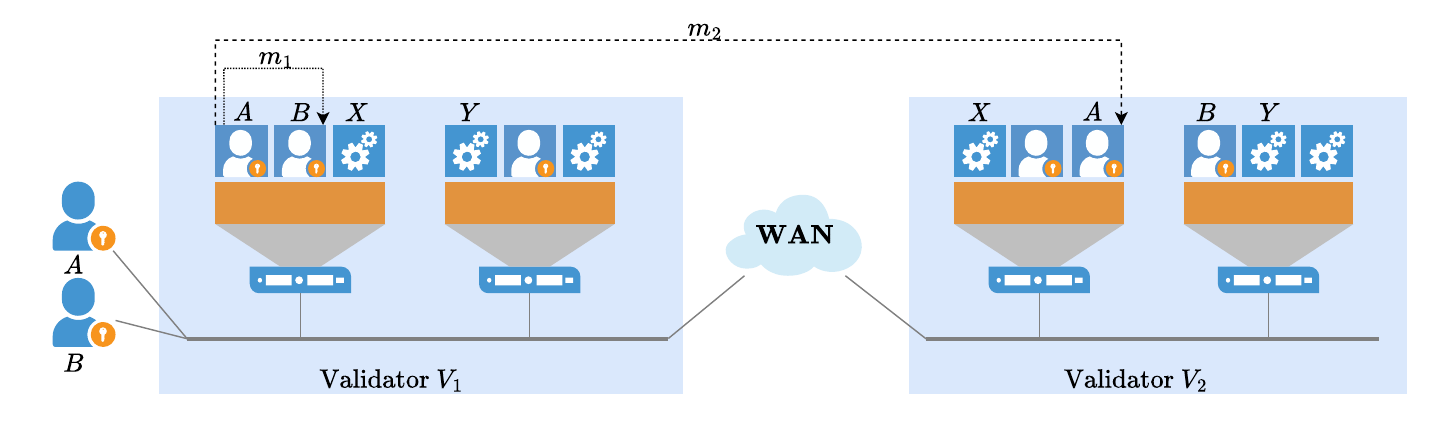}
    \caption{System entities. The validators $V_1$ and $V_2$ maintain the user actors corresponding to users $A$ and $B$, and reactive actors, e.g., $X$ and $Y$. The message $m_1$ is sent from $V_1.A$ to $V_1.B$ via Inner Links and a message $m_2$ to $V_2.A$ via Outer Links. Entities can be spread across multiple machines and organized independently by each validator.}
    \label{fig:system-entities}
\end{figure}

Apart from direct messages, validators use two agreement primitives: Parallel Optimistic Agreement (POA), which is used for transactions involving reactive actors, and Parallel Optimistic Broadcast (POB), which is used for transactions that involve only user actors. Both primitives contain a fast path, consisting of (i) a broadcast step and (ii) a single voting step.
In case of a failure (and only in case of failure), the fast path is followed by a failover mechanism (slow path) that ensures safety and liveness.
Both primitives run in consecutive instances and are described in \cref{sec:poa} and \cref{sec:pob} respectively.


Through these agreement primitives, we ensure that honest validators have a consistent view of the state of each reactive actor and the ownership of owned objects.

\subsection{Transactions}
\label{sec:transactions}

Every transaction in our system \emph{consumes} owned objects, meaning that once a transaction is executed, the objects it consumed no longer exist. Every transaction would consume a \emph{gas} object to pay for computation fees, the economics of which we leave outside the scope of this work.
Every transaction has a $Code$ field, that specifies a list of commands to perform when a transaction is being executed.
These commands can be (a) actions over owned objects a transaction consumed or created, (b) creating owned objects, (c) creating new actors, and (d) issuing new transactions.
The model differentiates between user and reactive actor transactions.

\subparagraph*{User Actor Transactions.}

Users can instruct a user actor they control to issue a transaction.
Transactions issued by the user actor are categorized based on the presence of a recipient.
If a transaction does not have a recipient, it is referred to as a UA transaction.
If it does have a recipient, which is always a reactive actor, it is referred to as a UA-RA transaction.

A UA transaction is of the form $\langle A, sn, [O_1, \ldots, O_k], Code\rangle$ where $A$, a user actor, is the sender, $sn \in \mathbb{N}$ is a sequence number and it \emph{consumes} owned objects $O_1, \ldots, O_k$ (that must be owned by $A$) and performs actions specified in $Code$ over them.
$Code$ is allowed to create new owned objects at other user actors (arbitrarily many of those), but not reactive actors.
It may also spawn new reactive actors. The reason a UA transaction can create new owned objects at user actors but not reactive actors is that the former is commutative whereas the latter requires agreement on the order.  

UA-RA transactions are of the form $\langle A, sn, X, [O_1, \ldots, O_k], Code_\mathsf{pre}, Call, Code_\mathsf{post}\rangle$ instead.
That is, they additionally include a recipient $X$ that must be a reactive actor and a $Call$ field specifying a function call to perform on $X$.
A reactive actor might issue its own transactions as a result of processing a $Call$.
$Code_\mathsf{pre}$ can operate over the consumed objects and prepare them for input into the function call.
$Code_\mathsf{post}$ can instead operate over the objects returned by the function call and, for example, decide whether and which additional transactions to spawn based on them.
The pre- and post-processing $Code$ blocks may spawn transactions of their own and can be executed in parallel to any other transaction since they do not have access to the internal state of the reactive actor.

\begin{definition}[Conflicting Transactions]
    Two user transactions $tx$ and $tx'$ with $tx \neq tx'$ and $tx.sender = tx'.sender$ are said to be \emph{conflicting} if $tx.sn = tx'.sn$.
\end{definition}
%
%
Users are responsible for issuing non-conflicting transactions with consequent sequence numbers. 
Moreover, users are also responsible for making sure that a user actor they are issuing a transaction for will eventually own the objects consumed by the transaction.
We highlight that a user failing to adhere to these requirements has no global impact on the system, but just on the actors controlled by them.

\subparagraph*{Reactive Actor Transactions.}

Unlike user actors, reactive actors only issue transactions as a potential result of executing an incoming transaction.

A reactive actor transaction (RA-RA) is always sent to another reactive actor and shares the same structure as a UA-RA transaction, except it omits the sequence number. This omission is because the order of outgoing RA-RA transactions is determined by the order of incoming transactions. The sender is the reactive actor that produced the transaction.
Upon receiving either a UA-RA or an RA-RA transaction, a reactive actor might perform computation to change its state based on the current state and $Call$ data specified in the transaction.

\subsection{Network and Computation Model}

The partial synchrony settings of Global Stabilization Time (GST) and Unknown Delta~\cite{consensus_in_partial_synchrony} constitute the gold standard of assumptions under which modern blockchains are designed to operate.
\sysname is designed to function in the GST network model, i.e., correctness is ensured even in complete asynchrony, and liveness is achieved after an arbitrary point in time called Global Stabilization Time, or GST for short.
Before GST, messages can be delayed with arbitrary delay, but every message sent at time $t$ must be delivered by $\max(t + \Delta, GST + \Delta)$.
The parameter $\Delta$ is assumed to be known and fixed. 

We assume the time required for local computations and messaging internal to a validator to be negligible.

\section{\sysname Overview}
\label{sec:system-overview}

The core principle behind \sysname is to have a dedicated agreement mechanism \emph{per actor}. In practice, agreement is achieved differently for the different types of transactions.

\begin{itemize}
    \item 
    For user actors, validators should agree on \emph{outgoing transactions} for each given sequence number. This, paired with sequential transaction execution, guarantees them a consistent view of the system~\cite{astro}.
    To this end, UA transactions are disseminated through Parallel Optimistic Broadcast (POB), and the agreement follows either from the fast path if the optimistic conditions are met, or from the failover mechanism in case they are not. 
    \item
    Instead, for reactive actors, the validators must agree on \emph{incoming transactions}. Thus, both UA-RA and RA-RA transactions go through the Parallel Optimistic Agreement (POA) mechanism of the reactive actor of the transaction recipient. The consecutive POA instances provide a total order of \emph{incoming transactions} to execute on a reactive actor.
    Importantly, since agreement is inherited from the POA properties, a dedicated agreement on the user actor can be optimistically skipped, thus also providing two step finalization latency in the fast path.
\end{itemize}

Since POB is simpler and less general than POA, we describe POA in \cref{sec:poa} and POB in \cref{sec:pob}.
The complete process for handling UA and UA-RA transactions is detailed in \cref{sec:user-actor-agreement}, while the processing of RA-RA transactions is described in \cref{sec:ra-transactions}.

\begin{figure}[ht]
    \centering
    \includegraphics[width=0.8\linewidth]{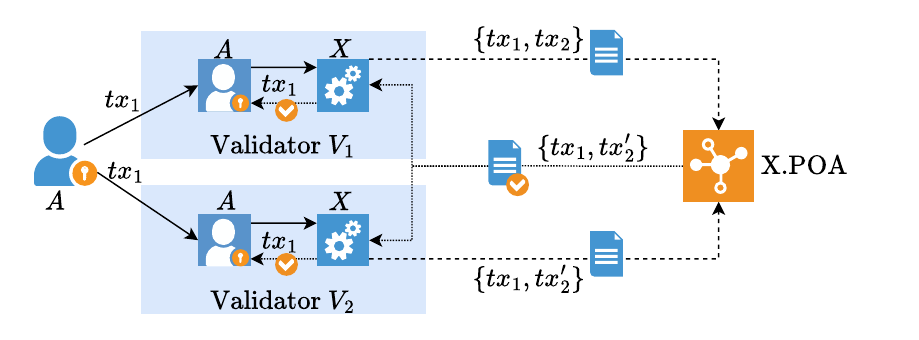}
    \caption{Mangrove UA-RA transaction processing. First, user $A$ broadcasts $tx_1$ to all $V_i.A$, who relay it to $V_i.X$ (full arrows). Then all $V_i.X$ propose $tx_1$ and other transactions they have to X.POA (dashed arrows). Finally, when $tx_1$ is included in a block decided by X.POA, all $V_i.X$ notify $V_i.A$ of the decision (dotted arrows).}
    \label{fig:ua-ra-processing}
\end{figure}
 

Both POA and POB were designed with three main goals in mind:
\begin{romanenumerate}
    \item They have a low good-case latency, that is, they terminate in two communication steps under optimistic conditions.
    \item A system running multiple instances in parallel can be sure that at most one transaction for each pair of user and sequence number is decided across all instances.
    \item A system running multiple instances in parallel does not suffer bottlenecks.
\end{romanenumerate}
    

\paragraph*{Wait-Free Locking.}

To achieve these goals, the POA and POB instances do not communicate directly but instead obtain locks via Inner Links to the user actor entities introduced in \cref{sec:model}.
These entities provide security against conflicting transactions by (locally) tying a sequence number to a transaction. To describe this process more precisely, we introduce the following notation, also used throughout the remainder of this work:

\subparagraph*{Notation.}
$U$ denotes an arbitrary user, $V.A$ denotes an arbitrary user actor entity of an arbitrary validator,
and $V.X$ denotes an arbitrary reactive actor entity of an arbitrary validator. 


User actor entities maintain the following two data structures.

\begin{definition}[Slow-Path Locked]
    Each $V.A$ maintains a map $SPLocked$, mapping sequence numbers to transactions.
    If $V.A.SPLocked[tx.sn] = tx$, we say that $V.A$ \emph{SP-locked} (slow-path locked) $tx$.
     
\end{definition}

\begin{definition}[Fast-Path Locked]
    Each $V.A$ maintains a map $FPLocked$, mapping sequence numbers to transactions.
    If $V.A.FPLocked[tx.sn] = tx$, we say that $V.A$ \emph{FP-locked} (fast-path locked) $tx$.
\end{definition}

Intuitively, SP-locking a transaction ensures that $V.X$ will never propose a conflicting transaction in the slow path, whereas FP-locking a transaction ensures $V.X$ will never vote for a proposal containing a conflicting transaction in the fast path. However, note that $V$ can FP-lock $tx$ and SP-lock $tx'$ with $tx$ and $tx'$ being conflicting. That might happen in a case $V$ received $tx$ from a user but received a lot of votes for a proposal containing $tx'$, which ``forces'' $V$ to propose $tx'$ in the slow path.

We describe the interplay between the different building blocks of \sysname\ and their correctness in \cref{sec:parallel-tx-processing} and in \cref{sec:proofs-transaction-execution}.

%

\section{Parallel Optimistic Agreement}
\label{sec:poa}

This section describes the algorithm used by validators to agree on a block $B$ of transactions to be executed at a reactive actor $X$. For every validator $V$ and reactive actor $X$, $V.X$ has a $pool$, that is, a set of transactions that $V.X$ wants to execute on $X$.
This algorithm is defined assuming a designated validator $L$, called leader.
The Parallel Optimistic Agreement (POA) primitive has an interface consisting of:

\begin{interface}
    - \texttt{function} $\textsc{Initiate}(k, pool)$: start the $k$-th instance with a transaction $pool$ \\
    - \texttt{callback} $\textsc{Decide}(k, B)$: decide a block $B$ in $k$-th instance
\end{interface}


First, in the \emph{fast path} the leader broadcasts its proposed block and all other validators cast their votes.
A validator decides on a block once it receives \textit{enough} votes, a process we refer to as a fast-path decision.
Following the fast path, and only if necessary, a \emph{slow path} (whose components are described in \cref{sec:slow path}) is initiated to ensure liveness in cases where users or the leader misbehave or the network experiences asynchrony.
We refer to a decision made in the slow path as a slow-path decision.

\begin{figure}[ht]
    \centering
    \includegraphics[width=0.8\linewidth]{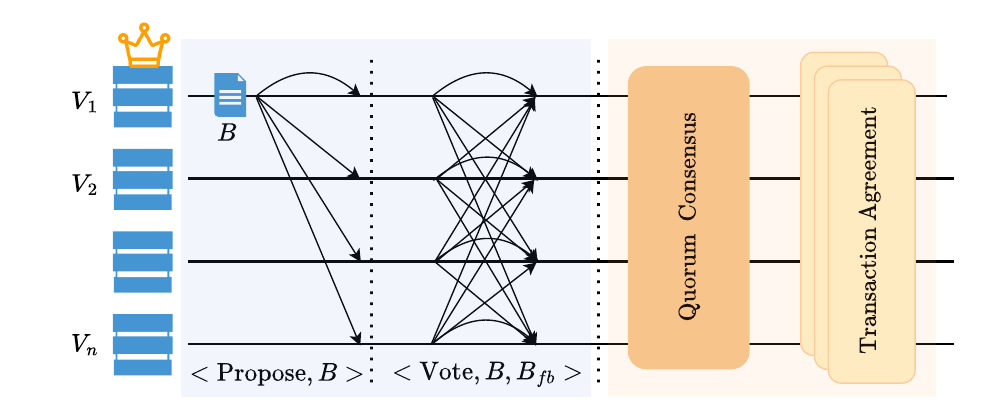}
    \caption{POA scheme. In the fast path (blue rectangle), the leader (in this case $V_1$) broadcasts their proposal $B$.
        Then, validators cast their vote on the proposal, and append their own transactions $B_{fb}$. In case the fast path fails, validators participate in the slow path (orange box), which consists of one instance of Quorum Consensus and multiple instances of Transaction Agreement.}
    \label{fig:poa_scheme}
\end{figure}

\subsection{Properties}
\label{sec:poa-properties}

A single instance of POA satisfies the following properties.

\begin{property}[Agreement]
\label{prop:agreement}
    If two honest validators decide blocks $B$ and $B'$ respectively, then $B = B'$.
\end{property}

\begin{property}[Termination]
\label{prop:termination}
    Every honest validator eventually decides a block.
\end{property}

\begin{property}[Fast Termination]
\label{prop:fast-termination}
    If $L$ is honest, at most $p$ validators misbehave, the system has reached GST, and for every UA-RA transaction $tx \in L.X.pool$ the user who issued $tx$ is honest, then all honest processes decide and stop sending messages in two communication steps. 
\end{property}

\begin{definition}[Emitted transaction]
\label{def:emit}
    We say that a transaction $tx$ is \emph{emitted} if it is either (i) a user transaction from an actor $A$ signed and broadcast at the moment where the whole system is in the state such that there exists an honest validator $V.A$ who will eventually execute a transaction with a sequence number $tx.sn - 1$ and who will eventually own all objects consumed by $tx$ or (ii) produced by a reactive actor at an honest validator.
\end{definition}

\begin{property}[Validity]
\label{prop:validity}
    (I) If a transaction $tx$ is present in the pool of all honest validators at the start of the protocol and $L$ is honest, then $tx$ is included in the decided block.
    (II) If a transaction is included in the decided block, it was emitted.
\end{property}

\begin{property}[No Conflict]
    Conflicting transactions cannot be simultaneously decided within POA instances, even if those correspond to different reactive actors.
\end{property}

\subsection{Multi-Instancing}
\label{sec:poa-multi-instancing}

\subparagraph*{Leader Oracle.}

POA is designed to run in multiple instances, and to ensure liveness of the system, it is essential to have honest leaders.
We assume all validators have access to a common $leaderOracle$ function, which given an instance number outputs a leader for that instance.
We require this function to output an honest leader infinitely often.
This can be achieved by a random choice (assuming a common random source) or by a round-robin approach.

\subparagraph*{Switching instances.}

A validator who decides in a POA instance at time $t$, broadcasts their decision along with the proof (in practice it can be a single aggregate signature). By time $\max(t + \Delta, GST + \Delta)$, every honest validator will receive a proof.
Upon receiving such a valid proof, validators rebroadcast the proof and decide.
This mechanism allows for the following property.
\begin{property}[Common Termination]
\label{prop:common-termination}
    If an honest validator decides in the $k$-th POA instance at time $t$, then every honest validator decides in this instance by at most $\max(t + \Delta, GST + \Delta)$.
\end{property}

The following property guarantees the progress of each validator. 

\begin{property}[Multi-Termination]
    For all honest validators $V$ and reactive actors $X$, $V.X$ eventually decides in the $k$-th instance of POA for $X$.
\end{property}

\subsection{Algorithm}
\label{sec:poa-algorithm}

\begin{algorithm}[]
\caption{Parallel Optimistic Agreement \textbf{on V.X} (High Level)}
\label{alg:poa-high-level-vx}
\begin{algorithmic}[1]
    \Uses\ Outer Links, Inner Links, Quorum Consensus, Transaction Agreement
    \EndUses

    \Statex

    \Function{Initiate}{k, pool} \Comment{For Leader}
        \State $proposalBlock \gets pool$
        \State Broadcast via Outer Links $proposalBlock$
    \EndFunction

    \Statex

    \UponTrue{$time \leq 3\Delta$ \textbf{and} received proposal $B$} 
        \ForAll{$tx: \text{UA-RA transaction} \in B$} \Comment{In parallel}
            \State $A \gets tx.sender$
            \State Request $V.A$ via Inner Links to FP-lock $tx$
        \EndFor
    \EndUpon

    \Statex

    \UponTrue{all $V.A$ responded to an FP-lock request} 
        \If{all FP-locks are successful} $Vote \gets B$ \Else\ $Vote \gets \bot$ \EndIf
        \State $fallBackBlock \gets pool$
        \State Broadcast via Outer Links $\langle Vote, fallBackBlock\rangle$
    \EndUpon
    
    \Statex

    \UponTrue{exists a block $B$ with at least $n - p$ votes}
        \State Decide $B$ in POA
    \EndUpon

    \Statex

    \UponTrue{$Time > 3\Delta$ \textbf{and} exists block $B'$ with at least $n - p -2f$ votes} 
        \ForAll{$tx: \text{UA-RA transaction} \in B'$} \Comment{In parallel}
            \State $A \gets tx.sender$
            \State Request $V.A$ via Inner Links to SP-lock $tx$
        \EndFor
    \EndUpon

    \Statex

    \UponTrue{all $V.A$ responded to an SP-lock request \textbf{and} all SP-locks succeeded} 
        \State Propose $B'$ to Quorum Consensus
    \EndUpon

    \Statex

    \UponTrue{$Time > 3\Delta$ \textbf{and} (received $n - f$ votes) \textbf{and} (there is no block with $n - p - 2f$ votes \textbf{or} not all SP-Locks succeeded)} 
        \State $B'' \gets$ all transactions present in at least $n - 2f$ fallback blocks
        \State Attempt to (analogously) SP-lock all UA-RA transactions in $B''$
        \State $B'' \gets B'' \setminus \{\text{transactions that failed to SP-lock}\}$
        \State Propose $B''$ to Quorum Consensus
    \EndUpon

    \Statex

    \Upon{event}{decide in Quorum Consensus}{B_{qc}}
        \ForAll{$tx: \text{UA-RA transaction} \in B_{qc}$} \Comment{In parallel}
            \State $A \gets tx.sender$
            \State Request $V.A$ via Inner Links to propose $tx$ to Transaction Agreement
        \EndFor
    \EndUpon

    \Statex

    \UponTrue{all $V.A$ responded to an Transaction Agreement request} 
        \State $Fails \gets$ all UA-RA transactions from $B_{qc}$ which $V.A$ did not decide in Transaction Agreement
        \State $B_{POA} \gets B_{qc} \setminus Fails$
        \State Decide $B_{POA}$ in POA
    \EndUpon
    
\end{algorithmic}
\end{algorithm}

\begin{algorithm}[h]
\caption{Parallel Optimistic Agreement \textbf{on V.A} (High Level)}
\label{alg:poa-high-level-va}
\begin{algorithmic}[1]

    \Upon{event}{FP-lock request}{tx, X} 
        \If{a conflicting transaction is FP-locked} respond \textit{fail} to $V.X$ \EndIf
        \If{execution preconditions for $tx$ are not met} respond \textit{fail} to $V.X$ \EndIf
        \State Respond \textit{success} to $V.X$
    \EndUpon

    \Statex

    \Upon{event}{SP-lock request}{tx, X} 
        \If{a conflicting transaction is SP-locked} respond \textit{fail} to $V.X$ \EndIf
        \State Respond \textit{success} to $V.X$
    \EndUpon

    \Statex

    \Upon{event}{Transaction Agreement request}{tx, X} 
        \State Try to SP-Lock $tx$
        \State Propose an SP-Locked transaction with sequence number $tx.sn$ to Transaction Agreement instance $tx.sn$ of $A$
    \EndUpon

    \Statex

    \Upon{event}{decide in Transaction Agreement}{tx', sn}
        \If{$tx' = tx$} respond ``Transaction Agreement $sn$ Success'' to $V.X$
        \Else\ respond ``Transaction Agreement $sn$ Failure'' to $V.X$
        \EndIf
    \EndUpon

\end{algorithmic}
\end{algorithm}

At the start of the protocol, $L.X$ forms a block consisting of all $tx \in L.X.pool$ and broadcasts it.
Upon receiving a block $B$ an honest validator $V.X$ broadcasts its vote plus its own block (called a \emph{fallback block}) consisting of all transactions $tx \in V.X.pool$.
The vote is for $B$ in case (i) for every UA-RA transaction $tx \in B$, $tx$ is correctly signed, (ii) $V.A$ manages to FP-lock $tx$, ensuring there is no conflicting transaction and $tx$ can be executed and (iii) $V$ emitted every RA-RA transaction from $B$.
Otherwise, $V.X$ broadcasts a negative vote.

If at any time a validator $V.X$ receives $n - p$ votes for some block $B$, $V.X$ decides $B$, we call it a fast-path decision. Upon fast-path deciding, a validator broadcasts a proof that $B$ can be safely decided (in practice, this can be an aggregated threshold signature \cite{shoup2000practical} of $n - p$ votes).
In case a validator does not receive $n - p$ votes in $3\Delta$ time or those votes are for different blocks, they wait until they have $n - f$ votes and propose in the slow path. 

The slow path consists of two steps: Quorum Consensus (described in detail in \ref{sec:quorum-consensus}), followed by the parallel Transaction Agreements (described in \ref{sec:transaction agreement}) for every UA-RA transaction decided in Quorum Consensus. 
A validator proposes to Quorum Consensus according to the following cases.

\begin{romanenumerate}
\item There was some block $B$ for which $V.X$ received at least $n - p -2f$ votes. In this case, it is possible that some other validator received $n - p$ votes for $B$, so $V.X$ attempts to SP-lock all UA-RA transactions in $B$, and if it succeeds, proposes $B$ to Quorum Consensus.
    
\item If either there was no block for which $V.X$ saw $n - p - 2f$ votes or $V.X$ was not able to SP-lock some transaction in the block, $V.X$ forms a block to propose to Quorum Consensus based on the fallback blocks received.
\end{romanenumerate}

In particular, a transaction $tx$ is included in the Quorum Consensus proposal of $V.X$ if and only if $V.X$ saw $tx$ in at least $n - 2f$ fallback blocks and (in the UA-RA case) was able to SP-lock $tx$.

After deciding a block $B$ in the Quorum Consensus, for each UA-RA transaction $tx \in B$ from user $A$, $V.X$ sends $tx$ to $V.A$, who proposes SP-Locks and proposes it to $A$'s Transaction Agreement instance number $tx.sn$. If a conflicting transaction $tx'$ was SP-Locked before, then $tx'$ is proposed.

Upon deciding in the Transaction Agreement instance number $tx.sn$, $V.A$ notifies $V.X$ whether $tx$ was decided or not. After receiving all such notifications, $V.X$ decides on the block $B_{POA}$, which is formed from $B$ but omitting those UA-RA transactions which were not decided in their corresponding Transaction Agreement instances.

\subparagraph*{Intuition.} The reason to broadcast a fallback block is for verifiers to know the pools of each other, which would allow for a ``good'' proposal to Quorum Consensus in case the fast path fails. More precisely, knowing each other's pools, verifiers will propose ``popular'' transactions to Quorum Consensus, making them committed in the slow path and thus ensuring liveness.

The $3\Delta$ threshold consists of $\Delta$ for the leader's broadcast, $\Delta$ for verifiers' votes, and $\Delta$ for the possible shift in times at which the leader and verifier initiate the primitive. 

One needs a Transaction Agreement after Quorum Consensus to preclude committing conflicting UA-RA transactions on different reactive actors. Note that if the fast path succeeds, it ensures that no conflicting transaction can be committed; hence, we only need the Transaction Agreement in case the fast path fails.  

We provide a high-level pseudocode for POA for a reactive actor $X$ in  \cref{alg:poa-high-level-vx,alg:poa-high-level-va}. For a precise description, please see Appendix~\ref{sec:appendix:alg:poa}.

\section{Slow Path}
\label{sec:slow path}
In this section, we describe the two components of the slow path of POA and POB: Quorum Consensus and Transaction Agreement.
\subsection{Quorum Consensus}
\label{sec:quorum-consensus}

\begin{interface}
    - \texttt{function} $\textsc{Propose}(k, B)$: start the $k$-th instance with proposal $B$\\
    - \texttt{callback} $\textsc{Decide}(k, B)$: decide a block $B$ in $k$-th instance
\end{interface}

A Quorum Consensus is a partial-synchrony consensus algorithm, meaning it exposes an interface of proposing and deciding a block, and it satisfies Agreement, Termination, and Quorum Validity:

\begin{property}[Quorum Validity]
\label{prop:quorum validity}
    (I) If a transaction $tx$ is such that $tx$ is in the proposal of every honest validator, then $tx$ is included in the decided block.

    (II) If some transaction $tx$ is included in the decided block, then $tx$ is in the proposal of at least $n - 3f$ honest validators.
\end{property}

\begin{claim}
    There exists an algorithm that solves consensus with Quorum Validity.
\end{claim}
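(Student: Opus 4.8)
The plan is to reduce the statement to a standard partially-synchronous \emph{validated} Byzantine agreement primitive, which is known to exist for $n \geq 3f+1$ (e.g. a PBFT-style protocol equipped with an external-validity predicate), and to obtain Quorum Validity entirely from a deterministic post-processing step applied to the decided value. The key design decision is that validators do not agree directly on the output block $B$; instead they agree on a \emph{quorum of signed proposals}, from which $B$ is derived by a fixed, publicly computable rule.

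Concretely, I would proceed in three steps. First, every validator $V$ signs its proposal block $B_V$ for instance $k$ and disseminates the signed proposal. Second, I invoke the underlying validated Byzantine agreement on values of the form $S$, where $S$ is a map from a set of exactly $n-f$ distinct validators to correctly signed proposals; the external-validity predicate accepts $S$ iff all $n-f$ signatures verify and no validator appears twice. Since at least $n-f$ honest proposals are eventually delivered to everyone after $GST$, a valid input always exists, so by Termination and external validity of the underlying primitive every honest validator eventually decides the \emph{same} valid set $S$. Third, each validator deterministically computes
\[
    B \;=\; \{\, tx : tx \text{ appears in at least } n - 2f \text{ of the proposals in } S \,\}.
\]
Because $S$ is identical at all honest validators, they all output the same $B$, giving Agreement, and Termination is inherited from the underlying protocol.

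It remains to verify the two halves of Quorum Validity by a counting argument over the quorum $S$, which is the heart of the construction. For part (I), suppose $tx$ lies in the proposal of every honest validator. The quorum $S$ contains $n-f$ proposals, of which at most $f$ belong to Byzantine validators, hence at least $n-2f$ belong to honest validators; each such honest proposal contains $tx$, so $tx$ meets the $n-2f$ threshold and is included. For part (II), suppose $tx \in B$, so $tx$ appears in at least $n-2f$ proposals of $S$; at most $f$ of these are Byzantine, so $tx$ is in the proposals of at least $n-3f$ honest validators, as required.

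The hard part will be getting the two counting directions to coexist: the inclusion threshold must be low enough that an all-honest transaction survives the omission of up to $f$ Byzantine slots inside the quorum (forcing it to be at most $n-2f$), yet high enough that inclusion still certifies support from $n-3f$ honest validators (forcing it to be at least $n-2f$). The value $n-2f$ is the unique threshold that makes both bounds tight given a quorum of size $n-f$. I would also be careful to agree on a per-validator map rather than a multiset, so that a Byzantine validator cannot inflate a transaction's count by equivocating with several signed proposals: each validator contributes at most one slot, which keeps the counting exact. The remaining ingredients—signature unforgeability for the validity predicate and the liveness guarantee that a valid quorum exists after $GST$—are standard.
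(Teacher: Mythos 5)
Your proof is correct, but it takes a genuinely different route from the paper. The paper does not construct a protocol at all: it invokes a general characterization theorem for validity properties (Theorem~5 and Definition~2 of the cited work on the validity of consensus), and shows that Quorum Validity meets that theorem's solvability condition by exhibiting, for each input configuration $c$ with $n-f$ proposals, an admissible common value --- namely the block of transactions appearing in at least $n-2f$ of the proposals in $c$ --- that remains admissible in every similar configuration $c'$. You instead give an explicit reduction: agree (via a standard partially-synchronous validated Byzantine agreement, which exists for $n \geq 3f+1$) on a signed quorum $S$ of exactly $n-f$ proposals, then deterministically extract $B$ as the transactions appearing in at least $n-2f$ proposals of $S$. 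Notably, the arithmetic core is identical in both proofs: the same $n-2f$ threshold over a quorum of size $n-f$, with the same two counting directions (all-honest support implies $\geq n-2f$ occurrences in the quorum; $\geq n-2f$ occurrences imply $\geq n-3f$ honest supporters), so your observation that $n-2f$ is the unique workable threshold is exactly what makes the paper's abstract argument go through as well. What the paper's approach buys is brevity and a clean appeal to existing theory; what yours buys is a self-contained, constructive protocol --- one that also makes explicit the operational ingredients the abstract proof hides (signature unforgeability and one-slot-per-validator quorums to prevent Byzantine count inflation), and which is closer in spirit to the DAG-based instantiation the paper sketches later for high-throughput Quorum Consensus.
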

\begin{proof}
    The proof applies Theorem~5 and Definition~2 of~\cite{validity_of_consensus} to Quorum Validity. Throughout the proof we use terms and notation from~\cite{validity_of_consensus}. 

    Consider some input configuration $c \in \mathcal{I}_{n - f}$. We claim that a block $B$ consisting of transactions that are present in at least $n - 2f$ proposals in $\mathcal{I}_{n - f}$ belongs to $\bigcap_{c' \in sim(c)}val(c')$. Consider an input configuration $c'$ with at least $n - f$ elements (if there are less than $n - f$ elements in $c'$, then any block is admissible). We deduce that $|\pi(c) \cap \pi(c')| \geq n - 2f$, therefore, for each $tx \in B$, $tx$ is present in at least $n - 3f$ proposals of $c'$, therefore, can be included in the block decided for $c'$. Conversely, if some transaction is present in every proposal of $c'$, then it is present in at least $n - 2f$ proposal of $c$ and hence included in $B$. Thus, $B \in val(c')$.
\end{proof}

In \cref{sec:high-throughput-quorum-consensus} we discuss how modern high-throughput BFT protocols could be adapted to instantitate practical high-throughput Quorum Consensus.

\subsection{Transaction Agreement}
\label{sec:transaction agreement}

Transaction Agreement is a simple consensus primitive used to provide agreement on which transaction $tx$ should be committed at sequence number $sn$ for a given user $A$. Though it is possible to agree on this proactively, i.e., prior to submitting a transaction to POA, that would imply an additional latency for a UA-RA transaction under optimistic conditions, hence, \sysname does it \emph{retroactively} instead, i.e., after the decision in the Quorum Consensus has been made. 

\begin{interface}
    - \texttt{function} $\textsc{Propose}(tx)$: propose transaction $tx$ \\ 
    - \texttt{callback} $\textsc{Decide}(tx)$: decide a transaction $tx$ 
\end{interface}

The Transaction Agreement primitive should satisfy the specification of a partially synchronous multi-valued Strong Byzantine Agreement, namely:
\begin{itemize}
    \item Every process can propose and decide a (not necessarily binary) value.
    \item \textit{Agreement.} If two honest processes decide $v$ and $v'$ respectively, then $v = v'$.
    \item \textit{Quorum Validity.} If all honest processes propose the same value, only that value can be decided. Furthermore, if the value is decided, it was proposed by at least $n - 3f$ honest validators.
    \item \textit{Termination.} Given a partially synchronous network, every correct process eventually decides. 
\end{itemize}

This agreement primitive can be implemented as a special case of Quorum Consensus with blocks of size one.

\section{Transaction Processing}
\label{sec:parallel-tx-processing}


\subsection{Transaction Execution Properties}
\label{sec:tx-execution-properties}

First, we informally present several properties of transaction processing in \sysname.
These are grouped into \emph{correctness} properties, which are common among many distributed systems,
\emph{Fast Execution} properties, demonstrating the ability to process transactions with low latency,
and \emph{Parallelization} properties, showing the ability to make progress independently at different actors.
Full formal definitions as well as proofs of all properties are given in \cref{sec:proofs-transaction-execution}.

\subparagraph*{Correctness Properties.}
\label{sec:correctness-properties}

Reactive actors adhere to the Agreement, Validity, Total Order, and Integrity properties of Total Order Broadcast~\cite{defago2004total}, with respect to emission and execution of transactions.
User actors instead follow the Agreement, Validity, and Integrity properties of Byzantine Reliable Broadcast~\cite{brb}.

\subparagraph*{Fast Execution Properties.}
\label{sec:fast-execution-properties}

\sysname is designed so that under optimistic conditions, transactions are executed with low latency.
\cref{tbl:fast-execution} summarizes conditions needed for a transaction $tx$ of a given type to be executed fast. 
Those conditions are:
\emph{Honest author} (in case of user transactions) --- the user who issued a transaction is honest,
\emph{synchrony} --- the system is after GST,
\emph{honest leader} (in case of UA-RA or RA-RA with recipient $X$) --- there exists an honest validator $L$ who will start a POA instance for $X$ as a leader as soon as it has $tx$ in its pool,
\emph{good pool} --- for every UA-RA transaction $tx' \in L.X.pool$ a user who emitted $tx'$ is honest.
The \emph{latency} column shows how many communication steps are needed before every honest validator executes $tx$.
The \emph{resilience} column shows how many misbehaving validators the system can tolerate. 

\begin{table}[!ht]
    \centering
    \renewcommand{\arraystretch}{1.} 
    \setlength{\tabcolsep}{7pt} 

    \begin{tabular}{lccccccc}
        \toprule
        {\textbf{\begin{tabular}[c]{@{}l@{}}Transaction\\Type\end{tabular}}} 
        & {\textbf{\begin{tabular}[c]{@{}l@{}}Honest\\Author\end{tabular}}} 
        & \textbf{Synchrony} 
        & {\textbf{\begin{tabular}[c]{@{}l@{}}Honest\\Leader\end{tabular}}} 
        & {\textbf{\begin{tabular}[c]{@{}l@{}}Good\\Pool\end{tabular}}} 
        & \textbf{Latency}
        & \textbf{Resilience} \\
        \midrule
        UA    & \checkmark & \checkmark &   &   & $2 \delta$ & $\leq p$ \\
        RA-RA &   & \checkmark & \checkmark & \checkmark & $2 \delta $  & $\leq p$ \\
        UA-RA & \checkmark & \checkmark & \checkmark & \checkmark & \begin{tabular}[c]{@{}c@{}}$2\delta^*$ / $3\delta^\dagger$ \end{tabular} & $\leq p$ \\
        \bottomrule
    \end{tabular}
    \caption{Conditions and execution latency for the fast-path of different transaction types. 
    Latency across validators is denoted $\delta$, while latency within a validator is ignored.
    $^*$ When emitted by the leader. 
    $^\dagger$ For other validators. }
    \label{tbl:fast-execution}
\end{table}
There are instances where the \emph{good pool} condition is not met due to user misbehavior, preventing fast transaction execution.
Specifically, this occurs only when honest validators see conflicting transactions (Line \ref{line:poa-fp-lock-check}, Algorithm \ref{alg:poa}).
In such cases, an honest validator will have evidence of the user's misbehavior
(namely, two signed conflicting transactions) and may initiate punishment (e.g. destroy the gas object). We emphasize that users can only affect the liveness of the fast path and in this case, all honest transactions are still committed in the slow path (that is, Validity I of POA holds no matter the user's misbehavior).

\subparagraph*{Parallelization Property.}
\label{sec:parallelization-properties}

Classical blockchain systems address the double-spending problem \cite{nakamoto2008bitcoin} by totally ordering transactions. While effective, this introduces significant redundancy because many transactions are non-conflicting, meaning they can be executed in any order without affecting the outcome and therefore do not require total ordering.

Mangrove achieves optimal parallelization by ordering only the transactions that require it. It avoids ordering UA transactions entirely since these only create objects at user actors, and such operations are commutative. For reactive actor transactions, Mangrove maintains a partial order by keeping a separate ordered chain for each reactive actor. For example, given two reactive actors $X$ and $Y$ and a set of transactions $\{tx_1^X, \ldots, tx_k^X, tx_1^Y, \ldots, tx_l^Y\}$ (where $tx_i^A$ is a transaction with receiver $A$), Mangrove maintains two ordered sets $\{tx_1^X, \ldots, tx_k^X\}$ and $\{tx_1^Y, \ldots, tx_l^Y\}$ but does not impose an order between these sets.

For a system with a transaction set $T$ and for a reactive actor $X$, denote $T_X \subseteq T$ the subset of transactions with $X$ as a receiver.  The longest ordered chain in Mangrove is then $\max\limits_{X \in RAs}|T_X|$, whereas in classical systems like Bitcoin and Ethereum, the chain length is $|T|$. This length, $\max\limits_{X \in RAs}|T_X|$, is optimal unless additional assumptions are made about reactive actor behavior.


The creation of multiple short chains is beneficial in two ways. First, each chain can be maintained by a separate machine (running V.X) at each validator. This allows throughput to be scaled horizontally. 
Secondly, the throughput of an application (corresponding to an RA actor) depends solely on the length of its associated chain, and doesn't degrade when other applications experience high load.

\subsection{User Transactions}
\label{sec:user-actor-agreement}

Each user keeps a sequence number of the last issued transaction for each user actor it controls.
When issuing a new transaction $tx$ from a user actor $A$, a user must first check that $A$ has all owned objects consumed by $tx$. If a user fails to do so, $tx$ may never be executed, precluding all consecutive transactions from $A$. Note, though, that this only halts $A$ and not the rest of the system. After a user checks owned objects for $tx$, it assigns a new sequence number to it, signs it, and broadcasts it to all $V.A$-s using Parallel Optimistic Broadcast. In Parallel Optimistic Broadcast, a user sends $tx$ to all $V.A$-s, those attempt to FP-lock it, and if the FP-lock is successful, broadcast a vote for $tx$. Once a validator obtains $n - p$ votes $tx$, it fast-path decides $tx$, broadcasts a proof and stops sending messages. When unable to decide in the fast-path, validators invoke Transaction Agreement to ensure safety and liveness in case of asynchrony and validators' misbehavior. For pseudocode, see Appendix \ref{sec:pob}.


%
%
%

\begin{algorithm}[ht]
\caption{UA Transaction Processing}
\label{alg:ua transaction}
\begin{algorithmic}[1]
    \Uses\ Parallel Optimistic Broadcast $pob$
    \EndUses
    \Upon{event}{pob[A, k].Decide}{sender, tx} \label{line:pob decide} \Comment{On $V.A$}
    \If{$k = tx.sn$ \textbf{and} $\textsf{VerifySig}(A, tx)$}
        \State $pending \gets pending \cup \{tx\}$ \label{line:ua tx to pendig} \label{line:ua tx put in pending}
    \EndIf
    \EndUpon
    \UponExists{$tx \in pending : tx \text{ is UA} \textbf{ and } executed[tx.sn - 1]$ \textbf{and} $tx.consumedObjects \subseteq ownedObjects$} \label{line:ua unpending} \Comment{On $V.A$}
    \State $pending \gets pending \setminus \{tx\}$
    \State $\textsf{effects} \gets \textsf{VM.Execute}(tx.Code)$ \label{line:ua tx execute}
    \State $ownedObjects \gets ownedObjects \setminus tx.consumedObjects \cup \textsf{effects}.createdObjects$
    \State $executed[tx.sn] \gets true$
    \EndUpon
\end{algorithmic}
\end{algorithm}

For a UA transaction $tx$ from $A$ with a sequence number $sn$ and consumed objects $O_1, \ldots, O_k$, it is \emph{executed} if (i) $tx$ is decided in Parallel Optimistic Broadcast, (ii) a transaction from $A$ with a sequence number $sn - 1$ is executed and (iii) $A$ owns $O_1, \ldots, O_k$. See the pseudocode for UA transactions in Algorithm \ref{alg:ua transaction}. 

For a UA-RA transaction $tx$ from $A$ to $X$ with a sequence number $sn$ and consumed objects $O_1, \ldots, O_k$, $V.A$ \emph{sends} it to $V.X$ if $tx$ is correctly signed by $A$, $V.A$ have not seen any other transactions with sequence number $sn$, and (ii) and (iii) hold.
$tx$ is \emph{executed} as soon as it is decided in the POA of $X$. Note that execution crucially does not rely on (ii) and (iii) to hold. 
We would like to highlight here that UA-RA transactions \emph{do not use} an agreement mechanism on the sending user actor. 
See the pseudocode for UA-RA transactions in Algorithm \ref{alg:ua-ra transaction}.

\begin{algorithm}[ht]
\caption{UA-RA Transaction Processing}
\label{alg:ua-ra transaction}
\begin{algorithmic}[1]
    \Uses\ Outer Links $ol$, Inner Links $il$, POA $poa$
    \EndUses

    
    \Upon{event}{Emit UA-RA Transaction}{A, X, [O_1, \ldots, O_k], Code, Call} \Comment{On $U$}
    \If{\textbf{not} $executed[A][sns[A] - 1]$ \textbf{or} $\{O_1, \ldots, O_k\} \not\subseteq ownedObjects[A]$} \label{line:user ra transaction checks}
        \State \Return Error(``Not possible to emit'')
    \EndIf
    \State $tx \gets \mathsf{Sign}(\langle A, sns[A], X, [O_1, \ldots, O_k], Code, Call\rangle)$
    \State $sns[A] \gets sns[A] + 1$
    \ForAll {$V \in \mathcal{V}$}
        \textbf{trigger} $\langle \mathrm{ol.Send} \mid V, tx\rangle$
    \EndFor
    \EndUpon

    
    \Upon{event}{ol.Deliver}{tx: \text{UA-RA Transaction}} \Comment{On $V.A$}
    \If { \textbf{not} VerifySign($A$, $tx$) \textbf{or}
        $FPLocked[tx.sn] \neq \bot$} \label{line:ua-ra checks}
        \Return
    \EndIf
    \State $FPLocked[tx.sn] \gets tx$ \label{line:ua ra tx put tx in transactions}
    %
    \State \textbf{await} $executed[tx.sn-1]$ \textbf{and} $tx.consumedObjects \subseteq ownedObjects$ \label{line:ua-ra transaction unpending} 
    \State \textbf{trigger} $\langle\mathrm{il.Send} \mid tx.receiver, tx \rangle$ \label{line:ua-ra transaction send}
    \EndUpon

    
    \Upon{event}{il.Deliver}{tx, A} \Comment{On $V.X$}
    \If{$tx \notin executed$}
        $pool \gets pool \cup \{tx\}$\label{line:ua-ra transaction pool}
    \EndIf
    \EndUpon

    
\end{algorithmic}
\end{algorithm}


\subsection{Reactive Actor Transactions}
\label{sec:ra-transactions}

When a reactive actor $V.X$ decides a block, it starts executing transactions of that block in a deterministic order.
Some transactions might create outgoing transactions when executed.
Upon creating an outgoing transaction $tx$ that consumes owned objects $O_1, \ldots, O_k$, $V.X$ checks its owned objects.
If it owns all required objects, $V.X$ sends $tx$ to the receiver via inner links, and receiver adds $tx$ to the pool.
If some objects that transaction consumes are missing, then this transaction is immediately dropped and has no effect. For the pseudocode, please see Algorithm \ref{alg:ra-tx-execution} in \cref{sec:execution}.


\section{Discussion and Outlook}
\label{sec:discussion}

In general, a transaction may result in a \emph{cascade} of multiple subsequent transactions.
Classical systems guarantee that such cascades appear to be executed atomically and in isolation.
Moreover, they immediately execute the whole cascade of a transaction after agreeing on the initial transaction,
whereas \sysname (in the worst case) performs a separate agreement for each individual transaction in the cascade.
Therefore, in the case of deep cascades, we expect classical solutions to complete the execution of the cascade faster.
Empirical study of the workload types under which either approach performs better is subject to future research.

Finally, we acknowledge that the bit complexity and message complexity of POA are relatively high. However, the primary objective of this work is to demonstrate the feasibility of a system that supports smart contracts while enabling maximal parallelization. Optimizing these complexities is an important consideration, which we leave as future work.

\bibliography{bibliography}

\appendix

\section{Model Expressiveness}
\label{sec:expressiveness}

Most common applications can be easily adapted from something resembling Ethereum's smart contract model
to our model with a gain in scalability and without a loss in expressiveness apart from the obvious loss of free atomic composability.
For the most part, this is equivalent to how these applications would be designed for the Move VM~\cite{move}, on the Sui~\cite{sui_website} and Aptos~\cite{aptos_website} blockchains.

\begin{itemize}
    \item \textbf{Token:} A token is primarily defined by a data type for owned objects that adheres to a specific interface.
        Additionally, a token might have an associated reactive actor responsible for minting new tokens or handling a reserve.
    \item \textbf{Decentralized Exchange:}
        Each liquidity pool of a decentralized exchange should be its own reactive actor.
        This allows asynchronous access to different liquidity pools.
        One popular heavily congested contract does not impede access to any of the other liquidity pools.
    \item \textbf{NFT Marketplace:}
        The marketplace would be a reactive actor owning all the (NFT) objects currently for sale.
        Buyers can interact by sending accepted tokens as payment and receiving ownership of the desired object.
        Sellers can interact by sending it new  objects to put up for sale or delisting items, receiving back the object.
\end{itemize}

More generally, we argue the replicated actor model (\cref{sec:model}) does not lose expressiveness.
This can be shown by providing a general framework of translating applications from Ethereum's smart contract model to our model.
However, atomic composability is not inherently guaranteed at the protocol level.
If it is intended, it has to be specifically designed into applications and users could be charged additional fees at the application layer for the privilege.

\subparagraph*{Gas.}

To incentivize validators to perform computations, our system uses \emph{gas objects}, a type of owned object.
Each transaction needs to consume at least a gas object.
As validators process the $Call$ and $Code$ fields, they are compensated through fees deducted from the provided gas object.
The amount of gas required depends on the complexity of the transaction,
ensuring fair compensation for resource-intensive tasks.
Design and study of specific game-theoretic mechanisms in our system model is outside the scope of this work.

\subparagraph*{Atomic Composability.}

Full atomic composability could be trivially achieved by keeping all composable state in a single reactive actor.
However, this is not realistic when considering an open, diverse and growing ecosystem of applications.
More importantly, it does not make use of the scalability advantages of our model.
We can instead give users the option to request composability across reactive actors on-demand.

\subparagraph*{Locking.}

To this end we define a \emph{locking} pattern that applications may independently opt into.
Locking is unique in that it allows idle blocking while waiting for other calls to return.
A reactive actor that supports locking has associated state
$lockHolder$, $lockCollateral$, $lockStartTime$, $lockPrice$,
and exposes functions $lock()$, $unlock()$, $getLockHolder()$, and $getLockPrice()$.
The locking fee is the product of the current price for locking and the time the lock is held.
The lock is valid until the required fee has exceeded the posted collateral.
The price for locking can be updated by the reactive actor and may depend on the application as well as current usage statistics.
For example, a decentralized exchange may count accesses or trading volume to a given liquidity pool and set the fee proportional to its recent popularity.

Any $lockPrice$ larger than 0 prevents a permanent deadlock on the reactive actor.
However, the application protocol designers need to ensure that at any time the price is fair, to prevent abuse of this feature.
The price for holding a lock could even increase super-linearly in the time it is held.
This would further discourage continuously preventing others from acquiring the lock.

This would require some support on the protocol side as well.
At least, each transaction should receive a timestamp agreed upon by the validators.
This would serve as the basis to determine how long each lock is held.
Also, execution of other transactions at that reactive actor should be delayed until after the lock is released.
Maybe with the exception of a call checking whether a lock is currently being held.

\subparagraph*{Hard Example: Flash Loan.}

Some particularities of the strong atomic transaction model are not easily translatable.
Locking is necessary but not sufficient to enable full atomic composability.
For example, there is no way to implement a flash loan using just the above locking interface.

A flash loan is a loan where there is no risk of default because the loan is only valid within an atomic and isolated transaction.
In this case, the lender needs to be sure that the entire transaction may only commit if the loan is paid back.
If not, it should be reverted and the loan paid back.

\subparagraph*{Fully Atomic Transaction Cascades.}

Locking can be extended to provide atomic and isolated execution of entire transaction cascades.
To be able to revert transactions, an atomic transaction execution context is needed.
Most importantly, objects stay within the execution context until the transaction commits.
Otherwise, other transactions might be able to see partial effects of the cascade before it commits.
This would also make reversion on abort impossible without affecting other transactions.

Within an atomic transaction context all additional calls that are made are considered to be atomic and part of the same execution context.
These calls need to hold a lock of the recipient reactive actor.
The context is passed along the entire transaction cascade.
Any transaction within an atomic transaction context may only spawn new transactions within the same context.
Locks can only be freed at the end of the transaction (i.e. once it commits or aborts).
Conversely, if any of the locks run out of collateral, the transaction aborts.
If a particular transaction aborts, the entire transaction cascade aborts.
Importantly, all consumed objects are returned to the sender.

Using this extension of our model even flash loans can be realized.
The only additional thing that is required for safety is that the \texttt{lend()} method of the flash loan reactive actor can indicate that it needs to be called from an atomic execution context and may revert.

\section{High-Throughput Quorum Consensus}
\label{sec:high-throughput-quorum-consensus}

A reactive actor that is at the core of a popular application may experience high contention over a long time period.
Continuously trying to get transactions accepted on the fast-path in this case inhibits both throughput and latency.
On the other hand, continuously running a chained high-throughput consensus protocol at each reactive actor, where most instances only produce empty blocks, is a significant and unnecessary burden.
To alleviate this, reactive actors should be able to individually toggle between single-shot POA and a high-throughput consensus mechanism.
This could happen automatically based on certain heuristics (configured either at the system level or by each RA).
Switching from uncontested to contested mode can be done by deciding a special message in the same manner as any transaction.
Switching from contested back to uncontested mode can be done via the high-throughput consensus' epoch closing mechanism.

Existing high-throughput consensus protocols, like Narwhal \& Tusk~\cite{narwhal}, Bullshark~\cite{bullshark}, Shoal/Shoal++~\cite{shoal,shoal++} and Mysticeti~\cite{mysticeti} 
can be extended to achieve part (II) of Quorum Validity, which is what differentiates it from regular Validity.
For this, a simple consensus rule can be added.
This rule restricts whether an ordered transaction is actually delivered.
In addition to the voting on blocks necessary for ordering, validators directly vote on transactions.
Honest validators vote for a transaction if and only if there are no conflicting transactions in their pool.
Only transactions reaching a threshold of direct votes are delivered for execution.
If this threshold ensures a quorum intersection of at least $f+1$, no conflicting transactions can be committed.
This is the same rule that is added in \textsc{Mysticeti-FPC}, to make the general consensus compatible with the reliable broadcast fast-path allowed for owned-object-only transactions.

\section{Parallel Optimistic Broadcast}
\label{sec:pob}

The Parallel Optimistic Broadcast (POB) primitive has an interface consisting of:

\begin{interface}
    - \texttt{function} $\textsc{Broadcast}(sn, tx)$: broadcast $sn$-th transaction \\
    - \texttt{callback} $\textsc{Decide}(sn, tx)$: decide transaction $tx$
\end{interface} 

\begin{figure}[ht]
    \centering
    \includegraphics[width=0.7\linewidth]{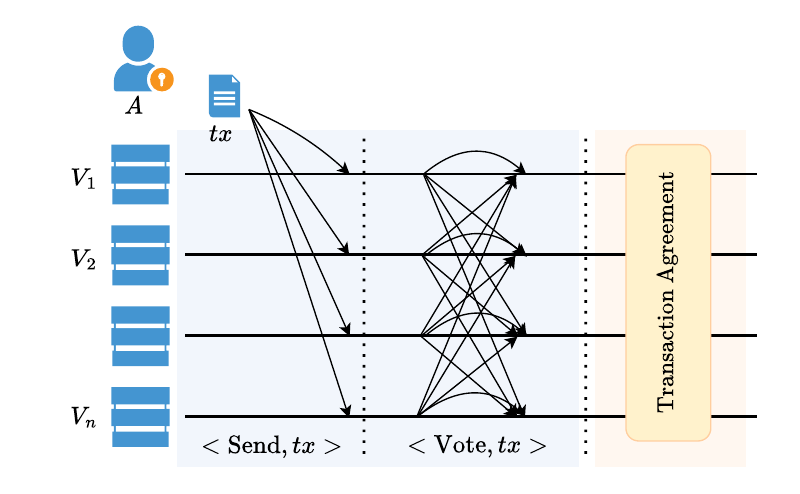}
    \caption{POB scheme. In the fast path (blue rectangle), a user $A$ broadcasts their transaction $tx$.
        Then, validators cast their vote. In case the fast path fails, validators participate in the slow path (orange box), which consists of one instance of Transaction Agreement.}
    \label{fig:pob_scheme}
\end{figure}

To issue a UA transaction $tx$, user $A$ broadcasts it among all $V.A$-s. Subsequently, $V.A$-s FP-lock $tx$ and vote for it, and if some $V.A$ obtains $n - p$ votes, they can decide $tx$ and stop sending messages. After receiving $n - p - 2f$ votes for $tx$, $V.A$ SP-locks $tx$ and proposes it to a Transaction Agreement instance number $tx.sn$ of $A$.

\begin{algorithm}[!ht]
\caption{Parallel Optimistic Broadcast} \label{alg:ua transaction broadacast}
\begin{algorithmic}[1]
    \Uses\ Perfect Links \textit{ol}, Transaction Agreement \textit{ta}
    \EndUses
    \Statex

    \Upon{event}{broadcast}{tx} \Comment{On User $A$}
        \ForAll{$V \in \mathcal{V}$}
            \State \textbf{trigger} $\langle ol.Send \mid V.A,  tx \rangle$
        \EndFor
    \EndUpon

    \Statex


    \Upon{event}{ol.Deliver}{A,  tx} \Comment{On $V.A$}
        \If{$FPLocked[tx.sn] \neq \bot$ \textbf{and} $FPLocked[tx.sn] \neq tx$} \label{line:pob check tx seen} \label{line:pob check fp lock}
            \State $vote \gets \mathsf{Sign}(\bot)$
        \Else
            \State $FPLocked[tx.sn] \gets tx$ \label{line:pob fix tx}
            \State $vote \gets \mathsf{Sign}(\mathsf{Vote}(tx))$
        \EndIf
        \ForAll{$V \in \mathcal{V}$}
            \State \textbf{trigger} $\langle ol.Send \mid V.A, vote \rangle$ \label{line:pob broadcast vote}
        \EndFor
    \EndUpon

    \Statex

    \UponTrue{received $n - p$ votes for $tx$} \Comment{On $V.A$}
        \If{$pobDecided$} \label{line:pob decided check 1}
            \State \Return
        \EndIf
        \State $proof(tx) \gets aggregate(n - p\ votes)$
        \ForAll{$V \in \mathcal{V}$}
            \State \textbf{trigger} $\langle ol.Send \mid V.A, proof(tx) \rangle$
        \EndFor
        \State \textbf{trigger} $Decide(tx)$ \label{line:pob fp decide}
        \State $pobDecided \gets True$
    \EndUpon

    \Statex

    \Upon{event}{ol.Deliver}{V', proof(tx)} \Comment{On $V.A$}
        \If{$verify(proof(tx))$}
            \ForAll{$V'' \in \mathcal{V}$}
                \State \textbf{trigger} $\langle ol.Send \mid V''.A, proof(tx) \rangle$
            \EndFor
            \State \textbf{trigger} $Decide(tx)$ \label{line:pob decide by proof}
        \EndIf
    \EndUpon
    
    \Statex

    \UponTrue{received $n - p - 2f$ votes for $tx$} \Comment{On $V.A$}
        \If{$SPLocked[tx.sn] \neq \bot$ \textbf{and} $SPLocked[tx.sn] \neq tx$}
            \State \Return
        \EndIf
        \State $SPLocked[tx.sn] \gets tx$ \label{line:pob sp lock}
        \State \textbf{trigger} $ta[tx.sn].Propose(tx)$ \label{line:pob propose to ua}
    \EndUpon

    \Statex

    \Upon{event}{ta[tx.sn].Decide}{tx} \Comment{On $V.A$}
        \If{$pobDecided$} \label{line:pob decided check 2}
            \State \Return
        \EndIf
        \State \textbf{trigger} $Decide(tx)$ \label{line:pob sp decide}
        \State $pobDecided \gets True$
    \EndUpon
\end{algorithmic}
\end{algorithm}

Parallel Optimistic Broadcast adheres to the following properties.

\begin{property}[Agreement]
    If two honest validators decide $tx_1$ and $tx_2$ in the same instance of Parallel Optimistic Broadcast, then $tx_1 = tx_2$.
\end{property}
\begin{proof}
    If both validators decide in the Transaction Agreement, the Agreement follows from the Agreement of Transaction Agreement. 

    If both validators decide in the fast path (here, we also call a decision done after receiving a proof on Line~\ref{line:pob decide by proof} a fast-path decision), that means that each of them obtained $n - p$ votes for $tx$ (Line~\ref{line:pob fp decide}), meaning there is at least one common honest vote, and hence $tx_1 = tx_2$ since no honest validator issues conflicting votes (Line~\ref{line:pob check fp lock}).

    If $tx_1$ was decided in the fast path, it means it got $n - p$ votes, so no conflicting transaction can obtain $n - p - 2f$ votes (the intersection is $(n - p) + (n - p - 2f) - n = n - 2p - 2f \geq f + 1$), hence no honest verifier will propose it to Transaction Agreement, hence, by the Validity property of the Transaction Agreement, $tx_2$ can't be decided. 
\end{proof}

\begin{property}[Integrity]
    A transaction $tx$ is decided at most once in the given instance and only if it was broadcast by the user.
\end{property}
\begin{proof}
    The only once part is ensured through checks (Lines~\ref{line:pob decided check 1} and \ref{line:pob decided check 2}), and if the user didn't emit the transaction, it will receive at most $f$ votes, which is not sufficient neither for the fast path ($n - p > f$), nor to propose it to the Transaction Agreement ($n - p - 2f > f$).
\end{proof}

\begin{property}[Validity]
    If an honest user broadcasts a transaction $tx$, it is eventually decided.
\end{property}
\begin{proof}
    Since an honest user does not issue conflicting transactions, all honest verifiers will eventually FP-lock $tx$ (Line~\ref{line:pob check fp lock}) and broadcast their vote. Now, if at any time an honest verifier gets $n - p$ votes, it broadcasts the proof, hence all honest verifiers will eventually decide. If none of the honest verifiers receive $n - p$ votes, since the user is honest, they all will eventually get $n - f \geq n - p - 2f$ votes for $tx$ and hence will propose it to the Transaction Agreement. Therefore, by the Validity property of Transaction Agreement, $tx$ will be eventually decided in it and hence in Parallel Optimistic Broadcast (Line~\ref{line:pob sp decide}).
\end{proof}

\begin{property}[Fast Termination]
    Given the system is after GST, a user issuing the transaction is honest and at most $p$ validators misbehave, every validator decides and stops sending messages in $2\delta$ time after a user broadcasts its transaction.
\end{property}
\begin{proof}
    Assume an honest user broadcasts a transaction $tx$ at time $t$. Then, since the system is after GST, all honest verifiers will receive it, and, since an honest user can not issue conflicting transactions, will FP-lock it (Line~\ref{line:pob fix tx}) and broadcast their vote for it (Line~\ref{line:pob broadcast vote}). Therefore, by the time $t + 2\delta$, every honest verifier will receive at least $n - p$ votes for $tx$ and will fast-path decide it (Line~\ref{line:pob fp decide}). 
\end{proof}

\section{Parallel Optimistic Agreement (Extended)}
\label{sec:appendix:alg:poa}

\begin{algorithm}[!ht]
\caption{Parallel Optimistic Agreement (Part 1)}
\label{alg:poa}
\begin{algorithmic}[1]
    \Uses\ Outer Links $ol$, Inner Links $il$, Quorum Consensus $qc$, Transaction Agreement $ta$
    \EndUses

    \Function{Initiate}{k, pool} \Comment{On $V.X$}
        \State $poaInstance \gets k$
        \If{$leaderOracle(k) = V$}
            \ForAll{$V' \in \mathcal{V}$}
                $ol.Send(V'.X, \langle k, B \coloneqq pool\rangle)$ \label{line: poa leader broadcast}
            \EndFor
        \EndIf
        \State $Timer.Restart()$
    \EndFunction

    \Statex
    
    \Upon{event}{$ol.Deliver$}{L, \langle k, B\rangle} \Comment{On $V.X$}
        \If{$leaderOracle(k) = L$}
            $block[k] \gets B$
        \EndIf
    \EndUpon

    \Statex
    
    \UponTrue{$poaInstance = k$ \textbf{and} $block[k] \neq \bot$ \textbf{and} $Timer \leq 3\Delta$ \textbf{and} $\forall tx \text{ RA-RA transaction} \in block[k]:\ tx \in pool$} \label{line:poa receive leader's block condition} \Comment{On $V.X$}
        \State $repliesFPLock[k] \gets 0$
        \State $successFPLock[k] \gets true$
        \ForAll{$tx: \text{UA-RA transaction} \in block[k]$} \Comment{In parallel}
            \State \textbf{trigger} $\langle il.Send \mid FPLock(tx), tx.sender\rangle$ \label{line:poa try fp-lock}
        \EndFor    
    \EndUpon

    \Statex
    
    \Upon{event}{$il.Deliver$}{FPLock(tx), X} \label{line:poa-fp-lock-check} \Comment{On $V.A$}
        \If{\textbf{not} $\mathsf{VerifySig}(tx.sender, tx)$}  \label{line:poa check user sign}
            \State \textbf{trigger} $\langle il.Send \mid \textbf{false}, X \rangle$; \Return
        \EndIf
        \If{$FPLocked[tx.sn] = \bot$}
            $FPLocked[tx.sn] \gets tx$
        \EndIf
        \If{$FPLocked[tx.sn] \neq tx$}
            \State \textbf{trigger} $\langle il.Send \mid \textbf{false}, X \rangle$; \Return
        \EndIf
        \If{within $\Delta$ time $executed[tx.sn - 1] \textbf{ and } tx.objects \subseteq ownedObjects$} \label{line:poa approve fp lock condtion}
            \State \textbf{trigger} $\langle il.Send \mid \textbf{true}, X \rangle$
        \Else
            \State \textbf{trigger} $\langle il.Send \mid \textbf{false}, X \rangle$
        \EndIf
    \EndUpon

    \Statex

    \Upon{event}{$il.Deliver$}{status, A} \Comment{On $V.X$}
        \State $repliesFPLock[k] \gets repliesFPLock[k] + 1$
        \State $successFPLock[k] \gets successFPLock[k] \wedge status$
    \EndUpon

    \Statex

    \UponTrue{$repliesFPLock[k] == |\{tx: \text{UA-RA transaction} \in block[k]\}|$} \label{line:poa trigger vote}\Comment{On $V.X$}
        \State $Vote \gets block[k]$
        \If{\textbf{not} $successFPLock[k]$}
            $Vote \gets \bot$
        \EndIf
        \State $B_\mathsf{fb} \gets pool$ \label{line:poa let fallback block}
        \ForAll{$V' \in \mathcal{V}$}
            \State \textbf{trigger} $\langle ol.Send \mid [ k, \mathsf{Vote}, B_{fb}], V'.X \rangle$ \label{line:poa broadcast vote}
        \EndFor
    \EndUpon
\algstore{poa}
\end{algorithmic}
\end{algorithm}
\addtocounter{algorithm}{-1}
\begin{algorithm}[!ht]
\caption{Parallel Optimistic Agreement (Part 2)}
\begin{algorithmic}[1]
\algrestore{poa}
    \Upon{event}{$ol.Deliver$}{U, \langle k, \mathsf{Vote}, B_\mathsf{fb} \rangle} \Comment{On $V.X$}
        \State $votes[k] \gets votes[k] \cup \{\mathsf{Vote}\}$
        \If{$\mathsf{Vote} \neq \bot \land |\{v \in votes[k]\ |\ v = \mathsf{Vote}\}| = n - p$} \label{line:poa-fast-path-condition}
            \State $proof(B) \gets aggregate(n - p\ votes)$
            \ForAll{$V'' \in \mathcal{V}$}
                \State \textbf{trigger} $\langle ol.Send \mid V''.X, proof(B) \rangle$ \label{line:poa broadcast proof}
            \EndFor
            \State \textbf{trigger} $\langle poa.Decide \mid k, B\rangle$ \label{line:poa-decide-optimistic}
        \EndIf
        \State $fallbackBlocks[k] \gets fallbackBlocks[k] \cup \{B_\mathsf{fb}\}$
    \EndUpon

    \Statex

    \Upon{event}{ol.Deliver}{V', proof(tx)} \Comment{On $V.X$}
        \If{$verify(proof(tx))$}
            \ForAll{$V'' \in \mathcal{V}$}
                \State \textbf{trigger} $\langle ol.Send \mid V''.A, proof(tx) \rangle$
            \EndFor
            \State \textbf{trigger} $\langle poa.Decide \mid k, B\rangle$ \label{line:poa transaction broadcast decide by proof}
        \EndIf
    \EndUpon

    \Statex
		
    \UponTrue{$poaInstance = k$ \textbf{and} $Timer > 3\Delta$ \textbf{and} $|votes[k]| = n - f$} \label{line:poa slow path condition} \Comment{$V.X$}
        \If{$\nexists B: |\{v \in votes[k]\ |\ v = B\}| \geq n-p-2f$}
            \State $needFallbackBlocks \gets true$
            \Return
        \EndIf
        \State $candidateB[k] \gets B$ such that $|\{v \in votes[k]\ |\ v = B\}| \geq n-3f$
        \State $repliesSPLock[k] \gets 0$
        \State $successSPLock[k] \gets true$
        \ForAll{$tx: \text{UA-RA transaction} \in candidateB[k]$} \Comment{In parallel}
            \State \textbf{trigger} $\langle il.Send \mid SPLock(tx), tx.sender\rangle$ \label{line:poa sp lock for fp}
        \EndFor
    \EndUpon

    \Statex

    \Upon{event}{$il.Deliver$}{SPLock(tx), X} \Comment{On $V.A$}
        \If{$SPLocked[tx.sn] = \bot$} \label{line:poa sp lock condition}
            $SPLocked[tx.sn] \gets tx$
        \EndIf
        \State $status  \gets SPLocked[tx.sn] == tx $
        \State \textbf{trigger} $\langle il.Send \mid status, X \rangle$
    \EndUpon

    \Statex

    \Upon{event}{$il.Deliver$}{status, A} \Comment{On $V.X$}
        \State $repliesSPLock[k] \gets repliesSPLock[k] + 1$
        \State $successSPLock[k] \gets successSPLock[k] \wedge status$
    \EndUpon

    \Statex

    \UponTrue{$repliesSPLock[k] == |\{tx: \text{UA-RA transaction} \in candidateB[k]\}|$} \Comment{On $V.X$}
        \If{$successSPLock[k]$} \label{line:poa if success sp lock}
            \State \textbf{trigger} $\langle qc.Propose \mid B \rangle$ \label{line:poa propose to qc for fast path};
            \Return
        \EndIf
        \State $needFallbackBlocks \gets true$
    \EndUpon
    \algstore{poa}
\end{algorithmic}
\end{algorithm}
\addtocounter{algorithm}{-1}
\begin{algorithm}[!ht]
\caption{Parallel Optimistic Agreement (Part 3)}
\begin{algorithmic}[1]
\algrestore{poa}
    \UponTrue{$needFallbackBlocks$} \Comment{$V.X$}
        \State $candidateB_2[k] \gets \{tx \mid |\{B_\mathsf{fb} \in fallbackBlocks[k] \mid tx \in B_\mathsf{fb} \}| \geq n - 2f \}$ \label{line:poa get slow path proposal from FB blocks}
        \State $SPLockedTxs[k] \gets \emptyset$
        \State $repliesSPLock_2[k] \gets 0$
        \ForAll{$tx: \text{UA-RA transaction} \in candidateB_2[k]$}
            \State \textbf{trigger} $\langle il.Send \mid SPLock_2(tx), tx.sender\rangle$ \label{line:poa sp lock fallback}
        \EndFor
    \EndUpon

    \Statex

    \Upon{event}{$il.Deliver$}{SPLock_2(tx), X} \Comment{On $V.A$}
        \If{$SPLocked[tx.sn] = \bot$} \label{line:poa sp lock condition 2}
            $SPLocked[tx.sn] \gets tx$
        \EndIf
        \State $status  \gets SPLocked[tx.sn] == tx $
        \State \textbf{trigger} $\langle il.Send \mid \langle status, tx\rangle, X \rangle$
    \EndUpon{}

    \Statex

    \Upon{event}{$il.Deliver$}{\langle status, tx \rangle, A} \Comment{On $V.X$}
        \If{$status$}
            \State $SPLockedTxs[k] \gets SPLockedTxs[k] \cup \{tx\}$
        \EndIf
        \State $repliesSPLock_2[k] \gets repliesSPLock + 1$
    \EndUpon

    \Statex

    \UponTrue{$repliesSPLock_2[k] == |\{tx: \text{UA-RA transaction} \in candidateB_2[k]\}|$} \Comment{On $V.X$}
        \State $QCPropose \gets SPLockedTxs[k] \cup \{tx \in candidateB_2 \mid tx \text{ is RA-RA transaction}\}$ \label{line:poa add ra-ra txs to qc proposal}
        \State \textbf{trigger} $\langle qc.Propose \mid QCPropose\rangle$ \label{line:poa propose to qc fallback}
    \EndUpon

    \Statex
		

    \Upon{event}{$qc.Decide$}{B_{qc}} \Comment{On $V.X$}
        \ForAll{$tx: \text{UA-RA transaction} \in B$}
            \State $A \gets tx.sender$
            \State \textbf{trigger} $\langle il.Send \mid UAInitiate(tx), V.A \rangle$ \label{line:poa-ua-initiate}
        \EndFor
    \EndUpon

    \Statex

    \Upon{event}{$il.Deliver$}{UAInitiate(tx), V.X} \Comment{On $V.A$}
        \If{$SPLocked[tx.sn] = \bot$}
            $SPLocked[tx.sn] \gets tx$ \label{line:poa sp lock for UA}
        \EndIf
        \State \textbf{trigger} $\langle ta[tx.sn].Propose \mid SPLocked[tx.sn] \rangle$ \label{line:poa propose ua}
    \EndUpon
    
    \Statex

    \Upon{event}{$ta[sn].Decide$}{$tx'$} \Comment{On $V.A$}
        \If{$tx' = tx$} \textbf{trigger} $\langle il.Send \mid \text{ ``UA success''}, V.X \rangle$
        \Else\ \textbf{trigger} $\langle il.Send \mid \text{ ``UA fail''}, V.X \rangle$
        \EndIf
    \EndUpon
    
    \Statex
    
    \UponTrue{received all Transaction Agreement responses} \Comment{On $V.X$}
        \State $B_{POA} \gets B_{qc} \setminus \{tx \mid tx \text{ failed in Transaction Agreement}\}$ \label{line:poa-ua-final-block}
        \State \textbf{trigger} $\langle poa.Decide \mid poaInstance, B_{POA}\rangle$ \label{line:poa decide slow}
    \EndUpon
    
    \Statex

\end{algorithmic}
\end{algorithm}

This section provides the full pseudocode and proofs showing that \cref{alg:poa}, as presented in \cref{sec:poa-algorithm}, achieves the properties laid out in \cref{sec:poa-properties}.

\begin{lemma}
    \label{lem:huuy}
        Given $tx \in B$ was fast-path decided, no honest validator can SP-Lock a conflicting transaction $tx'$.
    \end{lemma}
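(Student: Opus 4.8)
The plan is to track, for the sequence number $sn = tx.sn = tx'.sn$, which honest user-actor entities have FP-locked $tx$ versus $tx'$, and to show that these two sets are disjoint yet each is forced to be large, contradicting the resilience bound $n \geq 3f + 2p + 1$. First I would record the enabling fact about an honest $V.A$: its map $FPLocked[sn]$ is assigned only while it equals $\bot$ and is never overwritten, so an honest $V.A$ FP-locks \emph{at most one} transaction per sequence number. Consequently the sets $H_{tx}$ and $H_{tx'}$ of honest validators whose user-actor entity has FP-locked $tx$, respectively $tx'$, are disjoint, giving $|H_{tx}| + |H_{tx'}| \leq n - b$, where $b \leq f$ is the actual number of Byzantine validators.

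Next I would lower-bound $|H_{tx}|$. Since $tx \in B$ was fast-path decided, some validator collected $n-p$ votes for the single block $B$, and an honest validator emits a positive vote for $B$ only after its $V.A$ successfully FP-locks every UA-RA transaction in $B$, in particular $tx$. Discarding the at most $b$ Byzantine voters yields $|H_{tx}| \geq n - p - b$.

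Then I would enumerate the ways a conflicting $tx'$ could be SP-locked by an honest validator $V$, recalling that SP-lock requests travel over Inner Links and hence originate from $V$'s own (honest) reactive-actor entity $V.X'$. There are three: (A) $tx'$ lies in the candidate block $V.X'$ saw with at least $n-p-2f$ votes; (B) $tx'$ lies in $candidateB_2$, i.e. in at least $n-2f$ received fallback blocks; and (C) $tx'$ is carried in via \textsc{UAInitiate} after Quorum Consensus decides it. Case (C) reduces to the others: by part (II) of Quorum Validity a transaction in the decided block was proposed by at least $n-3f \geq 1$ honest validators, and an honest validator places a UA-RA transaction into a Quorum Consensus proposal only after SP-locking it in case (A) or (B); so ruling out (A) and (B) rules out (C). The delicate case is (B), since inclusion in fallback blocks concerns validators' \emph{pools}, not votes, so a pure vote-intersection argument does not apply directly. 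The resolution is that an honest $V.A$ forwards $tx'$ to $V.X'$ (placing it in that pool, hence in the later-broadcast fallback block) only inside the UA-RA delivery handler, which first executes $FPLocked[sn] \gets tx'$; thus every honest validator contributing $tx'$ to a fallback block has FP-locked it. Votes and fallback-block membership therefore feed a single count: in every case at least $n-p-2f$ validators witness $tx'$, of which at least $n-p-2f-b$ are honest and so belong to $H_{tx'}$, giving $|H_{tx'}| \geq n-p-2f-b$ (case (B) even gives $n-2f-b$).

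Finally I would combine the bounds: disjointness forces $(n-p-b) + (n-p-2f-b) \leq |H_{tx}| + |H_{tx'}| \leq n-b$, i.e. $n \leq 2p + 2f + b \leq 2p + 3f$, which contradicts $n \geq 3f + 2p + 1$. Hence no honest validator SP-locks $tx'$. I expect the main obstacle to be precisely case (B)—arguing that membership in many fallback blocks still implies FP-locking at honest validators—together with keeping the Byzantine bookkeeping uniform across all $b \leq f$ so that the single inequality closes every subcase.
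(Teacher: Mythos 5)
Your counting machinery (disjointness of the honest FP-lock sets $H_{tx}$ and $H_{tx'}$ plus the resilience bound) is sound, but your enumeration of SP-lock sites has a genuine gap, and it rests on a false premise. You assert that SP-lock requests travel only over Inner Links from the validator's own reactive-actor entity, and hence that there are exactly three cases. In fact there is a fourth: an honest $V.A$ also SP-locks \emph{directly}, with no Inner-Link request involved, inside Parallel Optimistic Broadcast --- upon receiving $n-p-2f$ votes for a transaction it sets $SPLocked[tx.sn]$ and proposes to Transaction Agreement (Line~\ref{line:pob sp lock} of \cref{alg:ua transaction broadacast}). This case matters for the lemma: a conflicting $tx'$ only needs to share sender and sequence number with $tx$, so it may be a plain UA transaction disseminated through POB rather than a UA-RA transaction, and then none of your cases (A)--(C) applies to it. The paper's proof explicitly lists this as one of its four cases.

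The omission is easily repaired with exactly your case-(A) argument: an honest $V.A$ casts a POB vote for $tx'$ only after FP-locking it (Line~\ref{line:pob fix tx}), so $n-p-2f$ POB votes put at least $n-p-2f-b$ honest validators in $H_{tx'}$, and your inequality closes as before. Apart from this missing case, your proof is correct and is essentially the same argument as the paper's: the paper performs the same FP-lock intersection counting, organized as per-case contradictions (honest FP-lockers of $tx'$ cannot vote for $B$, so $B$ cannot gather $n-p$ votes) rather than your single disjointness inequality; it handles your delicate case (B) by the same observation that forwarding a UA-RA transaction into a pool happens only after the sender-side handler sets $FPLocked[tx.sn]$; and it reduces the Quorum Consensus case to the earlier cases via Quorum Validity (II) exactly as you do.
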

    \begin{proof}
        A validator attempts to SP-lock a UA-RA transaction $tx'$ in four cases.
        Either because it received $n - p - 2f$ votes for a block containing $tx'$ (Line~\ref{line:poa sp lock for fp}), because it received $n - p - 2f$ votes for $tx'$ in Parallel Optimistic Broadcast (Line~\ref{line:pob sp lock}, \cref{alg:ua transaction broadacast}), because it received $n - 2f$ fallback blocks containing $tx'$ (Line~\ref{line:poa sp lock fallback}), or because it decided a block containing $tx'$ in Quorum Consensus (Line~\ref{line:poa sp lock for UA}).
        We want to show that given some honest validator Fast-Path decided a block containing $tx$, none of the above can hold.
        
        Let's start with $n - p - 2f$ votes.
        If a validator received $n - p - 2f$ votes for a block containing $tx'$, it means that at least $n - p - 3f \geq p + 1$ honest validators FP-locked $tx'$ (Line~\ref{line:poa try fp-lock}). The same argument holds for $tx'$ in Parallel Optimistic Broadcast (Line~\ref{line:pob fix tx}).
        But those wouldn't then vote for block $B$ that contains $tx$, since an FP-lock attempt in Line~\ref{line:poa try fp-lock} would fail.
        Hence, no validator can receive $n - p$ votes for $B$, thus no fast decision of $B$ is possible. A contradiction.
        
        Next, assume that a validator SP-locked $tx'$ due to receiving $n - 2f$ fallback Blocks with $tx'$.
        This implies that at least $n - 3f$ honest validators broadcasted a fallback block with $tx'$ and hence at least $n - 3f \geq 2p + 1$ honest validators have $tx'$ in their pool (Line~\ref{line:poa let fallback block}), meaning they have $FPLocked[tx.sn] = tx'$ (Line~\ref{line:ua ra tx put tx in transactions}) and therefore can not FP-lock $tx$, and wouldn't vote for a block containing $tx$.
        Thus, a Fast-Path decision of a block containing $tx$ is not possible, contradiction.
        
        Finally, assume that a validator SP-locked $tx'$ due to deciding a block containing $tx'$ in Quorum Consensus. By the part (II) of the Quorum Validity property, that means that at least $n - 3f \geq 2p + 1$ honest validators proposed $tx'$ to Quorum Consensus, and hence SP-Locked it for one of the first two reasons(Lines~\ref{line:poa sp lock condition} or \ref{line:poa sp lock condition 2}) which we've shown to be impossible given $tx$ was fast-path decided.
    \end{proof}

\begin{proof}[Agreement Property]
    If two validators decide blocks $B$ and $B'$ in the fast path, that means that each of them received at least $n - p$ votes (Line~\ref{line:poa-fast-path-condition}), hence there must be at least $(n - p) + (n - p) - n = 3f + 1$ common votes, therefore at least $2f + 1$ honest validators voted for both $B$ and $B'$ meaning $B = B'$ since an honest validator only votes once (Line~\ref{line:poa trigger vote}).

    Assume two honest validators $V$ and $V'$ decided blocks $B$ and $B'$ respectively in the slow path. By the agreement property of the Quorum Consensus, $V$ and $V'$ decided the same block $B_{qc}$ in Quorum Consensus, hence they have the same set of UA-RA transactions to send to Transaction Agreements and by the Agreement Property of the Transaction Agreement, the same subset $S$ of those was not decided. Thus, $B = B' = B_{qc} \setminus S$.

    Therefore, what is left to show is that if an honest validator $V$ decides $B$ in the fast path and another honest validator decides $B'$ in Quorum Consensus, then $B = B'$.

    If $V$ decides $B$ in the Fast Path, it means that it received $n - p$ votes for $B$.
    Hence among every $n - f$ votes there will be at least $n - p - 2f$ votes for $B$.
    We would like to show that every honest validator will propose $B$ to Quorum Consensus, and hence $B$ will be decided in Quorum Consensus.
    
    To do so, we need to show that a condition in Line~\ref{line:poa if success sp lock} will hold, namely that every UA-RA transaction $tx \in B$ will be successfully SP-locked.
    We show it by showing that no conflicting transaction $tx'$ can be SP-locked.

    So every honest validator will propose $B$ to the Quorum Consensus. Denote with $B_{qc}$ a block decided in Quorum Consensus. By the condition (I) of Quorum Validity, if $tx$ is in the proposal of every honest validator, then $tx$ is in the decided block, that is $tx \in B \rightarrow tx \in B_{qc}$. Denote with $S \subset B_{qc}$ a subset of UA-RA transactions in $B_{qc}$ that were not decided in the corresponding Transaction Agreement. We will show that $tx \in B \rightarrow tx \not\in S$. Indeed, assume $tx \in B$. Then, by Lemma \ref{lem:huuy}, all honest validators will propose it (Line~\ref{line:poa propose ua}) to the Transaction Agreement, and by the Quorum Validity of the Transaction Agreement $tx$ will be decided. This allows us to conclude that $B \subseteq B_{qc} \setminus S = B'$ 
	
    Next, note that $B_{qc} \subseteq B$.
    Indeed, by condition (II) of Quorum Validity, if $tx$ is in $B_{qc}$, then it was proposed by at least $n - 3f \geq 1$ honest validators, and hence, as we've showed that every honest validator proposes $B$,  $tx$ must be in $B$. This gives us $B' = B_{qc} \setminus S \subseteq B_{qc} \subseteq B$
\end{proof}

\begin{proof}[Termination Property]
    There will either be an honest validator that received $n - p$ votes for some block $B$, or there will be not. In case there will be, it will broadcast the proof, hence every honest validator will eventually receive it and will eventually decide and stop sending messages. If none of the honest validators ever receive $n - p$ votes, we argue by the termination of a slow path:

    By $3\Delta$ time after the start of the instance (Line~\ref{line:poa slow path condition}) an honest validator will propose to the Quorum Consensus (Line~\ref{line:poa propose to qc for fast path} or \ref{line:poa propose to qc fallback}).
    They eventually decide in the Quorum Consensus by the Termination property of Quorum Consensus and will propose to multiple Transaction Agreements (Line~\ref{line:poa-ua-initiate}). By the Termination property of the Transaction Agreement, all instances will terminate and a validator will decide in POA (Line~\ref{line:poa decide slow}).
\end{proof}

\begin{proof}[Fast Termination Property]
    Let's consider some honest validator $V$.
    Given the system is in the synchrony period, by the Common Termination property, $L.X$ will start the instance at most $\Delta$ time after $V.X$ started and hence $L.X$'s proposal $B$ will reach $V.X$ at most $2\Delta$ time after $V.X$ started.
    For every UA-RA transaction $tx \in B$, by assumption, a user who emitted $tx$ is honest, and hence $V$ will successfully FP-lock $tx$ since there is no conflicting transaction and since $V$ will be ready to execute $tx$ by the time $2\Delta$ by \cref{lem:user ready hence validator ready}.
    Also, by \cref{lem:transaction agreement}, for every RA-RA transaction $tx \in B$ $V$ will emit $tx$ at most $\Delta$ time after $L$ did.
    Hence $V.X$ will issue a vote for $B$. 
	
    Therefore, due to synchrony, and by the assumption that at most $p$ validators misbehave, every honest validator will acquire $n - p$ votes for $B$ within $3\Delta$ time after its own start and will Fast-Path decide $B$. 
\end{proof}

\begin{proof}[Validity Property]
    For simplicity, we only consider the validity of a block decided in the Slow Path, since by the Agreement Property, it's the same block as the one decided in the Fast Path.
	
    Consider a block $B$ decided in the Slow Path.
    Let's first prove condition (I) of the Validity Property:
    If a transaction $tx$ is present in the pool of all honest validators at the start of the protocol and $L$ is honest, then $tx$ is included in $B$.
	
    Consider such $tx$.
    An honest validator $V.X$ proposes to the Quorum Consensus either a block $B_1$ for which $V$ received at least $n - p - 2f$ votes (Line~\ref{line:poa propose to qc for fast path}) or a block $B_2$ consisting of transactions that were present in $n - 2f$ fallback blocks (Line~\ref{line:poa propose to qc fallback}).
    Since $L$ is honest, $B_1$ is its proposal, and it contains $tx$.
    So does $B_2$ because among every $n - f$ fallback blocks there will be at least $n - 2f$ blocks from honest validators and each such block contains $tx$ (by condition to form a Quorum Consensus proposal from fallback blocks at Line~\ref{line:poa get slow path proposal from FB blocks}).
    Therefore, every honest validator's proposal to Quorum Consensus will contain $tx$, and by the (I) condition of a Quorum Validity $tx$ will be in the decided block $B_{qc}$ of Quorum Consensus. If $tx$ is an RA transaction, it will trivially be in the decided block, so assume $tx$ is a UA-RA transaction. After deciding $B_{qc}$, every honest validator will propose $tx$ to the corresponding Transaction Agreement (Line~\ref{line:poa propose ua}), since no conflicting transaction can be SP-Locked since an honest user doesn't issue conflicting transactions. Therefore, $tx$ will be decided in its Transaction Agreement instance by the Quorum Validity, and hence be included in $B$.
	
    Now let's prove condition (II) of Validity, namely that if a transaction is included in the decided block, it was emitted.
    
    Consider some $tx \in B$.
    By condition (II) of Quorum Validity, $tx$ is in the proposal of at least $n - 3f \geq 1$ honest validators, and an honest validator proposes a UA-RA transaction $tx$ to Quorum Consensus only if it SP-locked $tx$ (lines \ref{line:poa sp lock for fp}, \ref{line:poa propose to qc for fast path} or \ref{line:poa sp lock fallback}, \ref{line:poa propose to qc fallback}), which is only possible if $n - 3f \geq 1$ honest validators have $tx$ in their pool which implies (Line~\ref{line:ua-ra transaction unpending}) that this transaction was emitted. And an honest validator proposes an RA transaction only if it was present in $n - 2f$ fallback blocks (Line~\ref{line:poa add ra-ra txs to qc proposal}), meaning at least $n - 3f \geq 1$ honest validators included it in their fallback blocks, meaning they have it in their pool (Line~\ref{line:poa let fallback block}), meaning emitted it (Line~\ref{line:ua-ra ra send}, \cref{alg:ua-ra transaction}).
\end{proof}

\begin{proof}[No Conflict Property]
    For the sake of contradiction, assume that there were POA instances $POA_1$ and $POA_2$ in which blocks $B_1$ and $B_2$ were decided, containing respectively $tx_1$ and $tx_2$, which are conflicting. Now, either one of those blocks were decided in the fast path or none were. 

    Consider the case where w.l.o.g. $B_1$ was decided in the fast path. This means that at least $n - p  - f$ honest validators FP-locked $tx_1$, and hence $B_2$ can receive at most $2f + p < n - p$ votes, hence can not be fast-path decided. Moreover, given $tx_1$ was fast-path decided, by Lemma \ref{lem:huuy} no honest validator can SP-Lock $tx_2$, therefore, no honest validator proposes $tx_2$ to the Transaction Agreement instance number $tx_2.sn$, therefore, by the Quorum Validity of Transaction Agreement, $tx_2$ can not be decided in the Transaction Agreement, and hence in the slow path. 

    Finally, assume that both $B_1$ and $B_2$ were decided in the slow path. But this implies that $tx_1$ and $tx_2$ were both decided in the Transaction Agreement instance number $tx_1.sn$ for user $tx_1.sender$ (which are the same values as $tx_2.sn$ and $tx_2.sender$), which is not possible due to the agreement property of Transaction Agreement. 
\end{proof}

\section{Pseudocode for Transaction Execution}
\label{sec:execution}

\begin{algorithm}[h]
\caption{Transaction Execution (UA-RA and RA-RA)}
\label{alg:ra-tx-execution}
\begin{algorithmic}[1]
    \Upon{event}{\textbf{once} poa.Decide}{k, B} \label{line:ua-ra tx poa decide} \Comment{On $V.X$}
    \State $pool \gets pool \setminus B$
    \ForAll{$tx \in \mathsf{DeterministicOrder}(B)$}
        \If{$tx \in executed$} \label{line:ua-ra executed check}
            \textbf{continue}
        \EndIf
        \State $executed \gets executed \cup \{tx\}$
        \State $\mathsf{effects} \gets \mathsf{VM.Execute}(tx.Code_\mathsf{pre}, tx.consumedObjects)$ \label{line:ua-ra transaction execute code}
        \State $\mathsf{effects} \gets \mathsf{VM.Execute}(tx.Call, \mathsf{effects})$ \label{line:ua-ra transaction execute call}
        \State $\mathsf{effects} \gets \mathsf{VM.Execute}(tx.Code_\mathsf{post}, \mathsf{effects})$
        \ForAll{$raratx \in \mathsf{effects}.raratxs$}
            \If {$raratx.consumedObjects \not\subseteq ownedObjects$}
                \State \textbf{continue}
            \EndIf
            \State $ownedObjects \gets ownedObjects \setminus raratx.consumedObjects$
            \State \textbf{trigger} $\langle\mathrm{il.Send} \mid raratx.receiver, raratx \rangle$ \label{line:ua-ra ra send}
        \EndFor
        \If{$tx$ is UA-RA}
            \State \textbf{trigger} $\langle\mathrm{il.Send} \mid tx.sender, Executed(tx, \textsf{effects}) \rangle$
        \EndIf
    \EndFor
    \State $poa.Initiate(k + 1)$ \label{line:ua-ra tx poa init}
    \EndUpon

    \Statex

    \Upon{event}{il.Deliver}{\langle Executed(tx), \textsf{effects}\rangle, V.X} \Comment{On $V.A$}
    \If{$executed[tx.sn]$}
        \Return
    \EndIf
    \State \textbf{await} $executed[tx.sn-1]$ \textbf{and} $tx.consumedObjects \subseteq ownedObjects$
    \State $ownedObjects  \gets ownedObjects \setminus tx.consumedObjects \cup \textsf{effects}.createdObjects$
    \State $executed[tx.sn] \gets \textbf{true}$
    \EndUpon
\end{algorithmic}
\end{algorithm}

\section{System Analysis}
\label{sec:proofs-transaction-execution}

\subsection{Formal Correctness Properties}

Properties for reactive actors follow the schema of Total Order Broadcast~\cite{defago2004total}.

\begin{property}[Reactive Actor Agreement]
    For any honest validators $V_1$ and $V_2$, a reactive actor $X$ and a transaction $tx$, if $V_1.X$ executes $tx$, then $V_2.X$ eventually executes $tx$.
\end{property}

\begin{property}[Reactive Actor Validity]
    If an honest user emits a UA-RA transaction, it is eventually executed by all correct validators.

    If a reactive actor on any honest validator emits an RA-RA transaction, it is eventually executed by all correct validators.
\end{property}

\begin{property}[Reactive Actor Total Order]
    For any reactive actor $X$, honest validators $V_1, V_2$, and transactions $tx_1, tx_2$, if $V_1.X$ executes $tx_1$ before $tx_2$, then $V_2.X$ executes $tx_1$ before $tx_2$.
\end{property}

\begin{property}[Reactive Actor Integrity]
    For any reactive actor $X$, honest validator $V$ and transaction $tx$, $V.X$ executes $tx$ at most once and only if $tx$ was emitted.
\end{property}

User actor properties follow the schema of Byzantine Reliable Broadcast~\cite{brb}.

\begin{property}[User Actor Validity]
    If a correct user emits a user actor transaction, this transaction is eventually executed by every honest validator.
\end{property}

\begin{property}[User Actor Integrity]
    Every correct validator executes each user actor transaction at most once and only if it was emitted.
\end{property}

\begin{property}[User Actor Agreement]
    For any user actor $A$, honest validators $V_1, V_2$ and user actor transactions $tx_1, tx_2$ both with sender $A$ and the same sequence number, if $V_1.A$ executes $tx_1$ and $V_2.A$ executes $tx_2$, then $tx_1 = tx_2$. 
\end{property}

\subsection{Formal Fast Execution Properties}
\begin{property}[Fast UA Transaction Execution]
    Given an honest user $A$ issues a UA transaction $tx$, the system is after GST, and at most $p$ validators are faulty, $tx$ is executed after $2$ communication steps. 
\end{property}

\begin{property}[Fast UA-RA Transaction Execution]
    Given an honest user $A$ issues a UA-RA transaction $tx$ to a reactive actor $X$, the system is after GST, the leader $L$ of the next POA instance of $X$ is honest and starts the instance as soon as it receives a transaction from $A$, and at most $p$ validators misbehave, then $tx$ is executed after $3$ communication steps.
\end{property}

\begin{property}[Fast RA-RA Transaction Execution]
    Given a reactive actor issues an RA transaction $tx$ to a reactive actor $Y$, the system is after GST, the leader $L$ of the next POA instance of $Y$ is honest and starts the instance as soon as it receives a transaction from $X$, and at most $p$ validators misbehave, then $tx$ is executed after $2$ communication steps.
\end{property}

\subsection{Proofs for All System Properties}

\begin{proof}[Reactive Actor Agreement]
    If $V_1.X$ executes $tx$, it means that it decided a block $B$ containing $tx$.
    Assume $V_1.X$ decided $B$ in the $k$-th instance of POA.
    By the Multi-Termination and Agreement properties of POA, $V_2.X$ will eventually decide in the $k$-th instance of POA for $X$ and its decision will be $B$.
    Therefore, $V_2.X$ will also eventually execute $tx$.
\end{proof}

\begin{corollary}[Reactive Actor Agreement]
\label{cor:eventual-emit}
    For two honest validators $V_1$ and $V_2$, and a reactive actor $X$, if $V_1.X$ emits $tx$, then $V_2.X$ eventually emits $tx$.
\end{corollary}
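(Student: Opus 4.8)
The plan is to reduce the corollary to the already-established Reactive Actor Agreement property together with the determinism of reactive-actor execution. First I would observe, via Algorithm~\ref{alg:ra-tx-execution}, that $V_1.X$ emits $tx$ only while executing some incoming transaction $tx_\mathsf{in}$ that belonged to a block $B$ decided by $V_1.X$ in some POA instance $k$; concretely, $tx$ appeared in $\mathsf{effects}.raratxs$ and the ownership check $tx.consumedObjects \subseteq ownedObjects$ passed at the moment of emission (Line~\ref{line:ua-ra ra send}). The goal is therefore to show that $V_2.X$ reaches exactly the same execution point with the same local state, and hence makes the same emission decision.

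Second, I would invoke the Reactive Actor Agreement property to conclude that $V_2.X$ eventually decides and executes $tx_\mathsf{in}$; more strongly, applying the Agreement and Multi-Termination properties of POA to each instance $0, 1, \ldots, k$, both $V_1.X$ and $V_2.X$ decide the identical sequence of blocks up to instance $k$. Since each entity processes these blocks in the same $\mathsf{DeterministicOrder}$ and runs the same deterministic $\mathsf{VM.Execute}$, I would argue by induction on the instance number that the two entities traverse the same sequence of states: the reactive actor's internal state and, crucially, its $ownedObjects$ set coincide after executing each decided block. The base case is the common initial state, and the inductive step holds because the next state is a deterministic function of the current state and the (identical) next decided block.

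Having established state agreement, the conclusion is immediate: at the step where $V_2.X$ executes $tx_\mathsf{in}$, it has the identical state to $V_1.X$, so it computes the same $\mathsf{effects}$, finds $tx \in \mathsf{effects}.raratxs$, and passes the same ownership check; hence $V_2.X$ also emits $tx$. I expect the main obstacle to be the state-agreement induction rather than the final step. One must verify that there is no circular dependency: in particular, that $X$'s $ownedObjects$ at any point depends only on the ordered sequence of incoming transactions $X$ has decided (which are agreed upon by POA), and not on the possibly-divergent pools of the two validators nor on emissions that have not yet been decided. Once execution is seen to be a pure function of the decided block sequence, determinism closes the argument.
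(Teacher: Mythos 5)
Your proof is correct and takes essentially the same route as the paper: identify the incoming transaction $tx'$ (your $tx_\mathsf{in}$) whose execution caused $V_1.X$ to emit $tx$, argue that $V_2.X$ eventually executes $tx'$ with the same internal state and the same $ownedObjects$, and conclude by determinism of execution that $V_2.X$ makes the same emission. The only difference is one of rigor: the paper compresses the state-agreement claim into a single appeal to Reactive Actor Agreement, whereas you explicitly justify it by induction over the common sequence of decided blocks (using POA Agreement, Multi-Termination, deterministic ordering, and deterministic execution), which is a welcome filling-in of a step the paper leaves implicit.
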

\begin{proof}
    Let $tx'$ be a transaction executing a $Call$ of which made $V_1.X$ emit $tx$.
    By the Reactive Actor Agreement property, $V_2.X$ will eventually execute $tx'$ having the same state and the same set of owned objects as $V_1.X$ had when executing $tx'$.
    Therefore, $V_2.X$ will also emit $tx$. 
\end{proof}

\begin{remark}
    \cref{def:emit} is independent of a validator since if a transaction was emitted by a reactive actor of one validator, by \cref{cor:eventual-emit} of the Reactive Actor Agreement property, every honest validator will eventually emit it. 
\end{remark}

\begin{definition}
    Consider an execution $E$ and two honest validators $V_1$ and $V_2$.
    We define a \emph{direct ancestor according to $V_1$} relation for two transactions $tx_1$ and $tx_2$ executed by $V_1$ in $E$ the following way: $tx_1$ is a direct ancestor of $tx_2$ if either (i) $tx_1$ and $tx_2$ are emitted by the same user actor and $tx_1.sn = tx_2.sn - 1$ or (ii) $tx_2$ consumes objects created by $tx_1$. Define an \emph{ancestor} relation as a transitive closure of a direct ancestor relation. 
\end{definition}

\begin{lemma}\label{lem:ancestors hence current}
    Consider two honest validators $V_1$ and $V_2$ and execution $E$.
    If $V_1$ executes some transaction $tx$ at time $t$ in $E$ and $V_2$ executes all ancestors of $tx$ according to $V_1$ by at most $\max(t + \Delta, GST + \Delta)$ then $V_2$ executes $tx$ by at most $\max(t + \Delta, GST + \Delta)$. 
\end{lemma}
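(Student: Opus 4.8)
The plan is to prove the statement by a case analysis on the type of $tx$, in each case establishing separately that (a) the decision certifying $tx$ reaches $V_2$ by the deadline $\max(t+\Delta, GST+\Delta)$, and (b) every local precondition $V_2$ needs in order to actually execute $tx$ is satisfied by that same deadline. The two driving facts will be the dissemination mechanism behind \cref{prop:common-termination} (a validator that decides broadcasts its decided block together with a proof, so any decision made by $V_1$ at a time $t'\le t$ is adopted by $V_2$ by $\max(t'+\Delta, GST+\Delta)\le\max(t+\Delta, GST+\Delta)$), and the hypothesis that all ancestors of $tx$ according to $V_1$ are already executed by $V_2$ by the deadline.

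First I would treat the case where $tx$ is a UA transaction. Here $V_1$ executing $tx$ means (Algorithm~\ref{alg:ua transaction}) that $tx$ was decided in POB and that, at execution time, $V_1$ had already executed the transaction with sequence number $tx.sn-1$ and owned all of $tx.consumedObjects$. The predecessor with sequence number $tx.sn-1$ is a direct ancestor of $tx$ by clause (i) of the ancestor definition, and the creator of each consumed object $O_i$ is a direct ancestor by clause (ii); by hypothesis $V_2$ has executed all of these by the deadline. Executing the creator of $O_i$ records $A$ as its owner, and since $tx$ is the unique consumer of $O_i$ and is not yet executed at $V_2$, the object is still owned by $A$ there; thus conditions (ii) and (iii) of UA execution hold at $V_2$ by the deadline. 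Combined with the POB decision of $tx$ arriving by the deadline (via the proof broadcast, the POB analogue of \cref{prop:common-termination}), $V_2$ executes $tx$ by $\max(t+\Delta, GST+\Delta)$.

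The second case is $tx$ being a UA-RA or RA-RA transaction with receiver $X$, which is the more delicate one. Here $V_1.X$ executing $tx$ means $V_1.X$ decided the block $B_k$ containing $tx$ in the $k$-th POA instance of $X$ and, because POA instances run consecutively, had already decided and executed every earlier block $B_1,\dots,B_{k-1}$; all of these decisions occurred at times $t_1\le\dots\le t_k\le t$. Applying \cref{prop:common-termination} to \emph{each} instance $j\le k$ shows that $V_2.X$ adopts every $B_j$ (contents included) by $\max(t_j+\Delta, GST+\Delta)\le\max(t+\Delta, GST+\Delta)$. Since execution at a reactive actor carries no further precondition --- the decided block is simply replayed through the VM in deterministic order, cf.\ Algorithm~\ref{alg:ra-tx-execution} --- and local computation is assumed instantaneous, once $V_2.X$ holds all of $B_1,\dots,B_k$ it executes them in order and reaches $tx$ before the deadline.

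I expect the main obstacle to be exactly this reactive-actor case, because the ordered chain of $X$ forces $tx$ to wait for blocks $B_1,\dots,B_{k-1}$ whose transactions are in general \emph{not} ancestors of $tx$ in the data-dependency sense, so the ancestor hypothesis alone does not bridge the gap. The resolution is to lean on \cref{prop:common-termination} applied to all instances up to $k$ rather than only to instance $k$, together with the consecutive-instance structure that forces $t_1\le\dots\le t_k\le t$ and hence a single uniform deadline; note also that, since the decision broadcast carries the block itself, $V_2$ need not have independently produced or reconstructed $tx$. A secondary point to be careful about is that the within-$\Delta$ propagation must hold for slow-path decisions as well as fast-path ones; this is fine precisely because the proof-dissemination step underlying \cref{prop:common-termination} (and its POB counterpart) accompanies every decision, not only fast-path ones.
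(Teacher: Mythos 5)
Your proof is correct and takes essentially the same route as the paper's: a case split on transaction type, where UA-RA/RA-RA transactions are handled via Common Termination on the POA instances (with no need for the ancestor hypothesis), and UA transactions are handled by combining propagation of the POB decision with the ancestor hypothesis to establish the execution preconditions ($executed[tx.sn-1]$ and object ownership) at $V_2$. Your write-up is in fact somewhat more detailed than the paper's (applying Common Termination instance-by-instance and arguing ownership of the consumed objects explicitly), but the structure and key ingredients coincide.
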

\begin{proof}
    In case $tx$ is either UA-RA, RA-RA or RA, this follows from the Reactive Actor Agreement and Common Termination properties without a need for the ancestor premise.
    Therefore, we focus on the case of $tx$ being a UA transaction.
    Denote $A := tx.sender$ and $tx'$ a transaction with $sender = A$ and $tx'.sn = tx.sn - 1$.
    Note that if $V_1$ executed $tx$, it means it decided it in Parallel Optimistic Broadcast (Line~\ref{line:pob decide}, \cref{alg:ua transaction}), meaning $V_2$ will also eventually decide it in Parallel Optimistic Broadcast and will put it into $pending$ (Line~\ref{line:ua tx put in pending}).
    And $V_2$ will be ready to execute $tx$ (Line~\ref{line:ua unpending}) by the time of at most  $\max(t + \Delta, GST + \Delta)$ since by the state of the Lemma, by that time $V_2$ will execute all ancestors of $tx$. 
\end{proof}

\begin{lemma}
\label{lem:transaction agreement}
    If an honest validator executes a transaction $tx$ at time $t$, then every honest validator executes $tx$ at time at most $\max(t + \Delta, GST + \Delta)$.
\end{lemma}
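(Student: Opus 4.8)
The plan is to reduce the general statement to \cref{lem:ancestors hence current} by strong induction over the ancestor structure of $tx$. \Cref{lem:ancestors hence current} already supplies the one-step engine: once every ancestor of $tx$ according to $V_1$ has been executed by $V_2$ within the required bound, $tx$ itself is executed by $V_2$ within the bound $\max(t + \Delta, GST + \Delta)$. It therefore suffices to push the time bound down to the ancestors. I would fix an arbitrary honest validator $V_2$ and induct on the ancestor relation of $tx$ according to $V_1$. This relation is well-founded and the ancestor set is finite, because every ancestor is itself a transaction that $V_1$ executed no later than $t$, and only finitely many transactions are executed by any finite time.

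For the base case, a transaction with no ancestors satisfies the premise of \cref{lem:ancestors hence current} vacuously, so $V_2$ executes it by $\max(t + \Delta, GST + \Delta)$. For the inductive step, the key observation is that if $V_1$ executes $tx$ at time $t$, then $V_1$ must already have executed every direct ancestor of $tx$ at some earlier time: the sequence-number predecessor is required by the UA execution rule (Line~\ref{line:ua unpending}), and any transaction creating an object consumed by $tx$ must have been executed for $V_1$ to own that object. Hence each ancestor $tx_a$ is executed by $V_1$ at some time $t_a \le t$ and has a strictly smaller ancestor set, so the induction hypothesis yields that $V_2$ executes $tx_a$ by $\max(t_a + \Delta, GST + \Delta) \le \max(t + \Delta, GST + \Delta)$, using monotonicity of $\max$. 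Applying this to all ancestors and then invoking \cref{lem:ancestors hence current} gives that $V_2$ executes $tx$ by $\max(t + \Delta, GST + \Delta)$.

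The main obstacle, and the place that deserves care, is the case analysis already absorbed into \cref{lem:ancestors hence current}: for UA-RA, RA-RA, and RA transactions the bound follows directly from Reactive Actor Agreement together with Common Termination (\cref{prop:common-termination}) and does not even consume the inductive premise, whereas only the UA case genuinely relies on the induction. I would make explicit that the induction is therefore propagating solely through UA transactions and their sequence/object dependencies, and double-check that the ancestor relation as defined captures \emph{exactly} the preconditions of the UA execution rule (predecessor executed and consumed objects owned), so that no additional waiting condition can delay $V_2$ beyond $\max(t + \Delta, GST + \Delta)$. Pinning down finiteness and well-foundedness of the ancestor set rigorously is the remaining technical point to discharge cleanly.
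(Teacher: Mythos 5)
Your proposal is correct and essentially matches the paper's own proof: both handle UA-RA/RA-RA transactions via Reactive Actor Agreement together with Common Termination (as packaged inside \cref{lem:ancestors hence current}) and then settle the remaining UA case by reducing to \cref{lem:ancestors hence current} over the finite ancestor set according to $V_1$. The only difference is presentational---the paper argues by minimal-counterexample contradiction where you run a well-founded strong induction, and your explicit appeal to monotonicity of $\max(t + \Delta, GST + \Delta)$ in $t$ (via $t_a \le t$) spells out a step the paper leaves implicit.
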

\begin{proof}
    For UA-RA and RA-RA transactions, this follows from the Reactive Actor Agreement and Common Termination properties. 
    We now give proof for UA transactions. 
	
    For the sake of contradiction, consider an execution $E$ of the protocol, two honest validators $V_1$ and $V_2$, and a UA transaction $tx$ such that in $E$ $V_1$ executes $tx$ at time $t$ and $V_2$ does not execute $tx$ by $\max(t + \Delta, GST + \Delta)$.
	
    By \cref{lem:ancestors hence current}, if $V_1$ executes $tx$ but $V_2$ does not, it means that there is a transaction that is an ancestor of $tx$ according to $V_1$ that is not executed by $V_2$.
    Repeat the proof process for that ancestor.
    Since there is a finite number of ancestors of each transaction and an ancestor relation is transitive, we end up with an ancestor of $tx$ that is not executed by the time $\max(t + \Delta, GST + \Delta)$ but all its ancestors are, which contradicts \cref{lem:ancestors hence current}.
\end{proof}

\begin{lemma}
\label{lem:user ready hence validator ready}
    Consider a user actor $A$ controlled by honest user $U$ and a transaction $tx$ from $A$.

    If $U$ emits $tx$ at time $t$, then for every honest validator $V$, $V.A$ will have $executed[tx.sn - 1]$ and $tx.objects \subseteq ownedObjects$ by $\max(t + \Delta, GST + \Delta)$.
\end{lemma}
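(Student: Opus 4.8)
The plan is to reduce both assertions — that every honest $V.A$ has $executed[tx.sn-1]$ and $tx.objects \subseteq ownedObjects$ by the deadline — to the catch-up guarantees of \cref{lem:transaction agreement,lem:ancestors hence current}. Write $tx'$ for the transaction of $A$ with $tx'.sn = tx.sn-1$, and $O_1,\dots,O_k = tx.objects$. Since $U$ is honest, the guard of the emit procedure (\cref{line:user ra transaction checks} of \cref{alg:ua-ra transaction}) lets $U$ sign and broadcast $tx$ only after its local replica of $A$ has executed $tx'$ and recorded ownership of every $O_i$. An honest user reaches this state solely by observing protocol finalizations: the decision of $tx'$ (together with that of its own predecessors), and the decisions of the transactions whose execution produced the $O_i$ at reactive actors or at other user actors. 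The first thing I would establish, therefore, is that by time $t$ every transaction in the ancestor set of $tx$ — the $sn$-predecessor chain of $A$, the producers of $O_1,\dots,O_k$, and their transitive ancestors — has already been \emph{decided}.

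Next I would turn ``decided by $t$'' into ``executed everywhere by $\max(t+\Delta, GST+\Delta)$''. Every decision that $U$ holds by time $t$ is backed by a transferable proof, so by the rebroadcast mechanism underlying \cref{prop:common-termination} (and its analogue for Parallel Optimistic Broadcast) every honest validator decides each of these ancestors by $\max(t+\Delta, GST+\Delta)$. Since local computation and intra-validator messaging are negligible, once an honest validator holds the whole decided ancestor set it replays the entire chain at once — this is exactly the mechanism of \cref{lem:ancestors hence current}, so the bound does not grow with chain depth — and \cref{lem:transaction agreement} gives the same conclusion for the reactive-actor executions that created the $O_i$. Applying this to $tx'$ yields $V.A.executed[tx.sn-1]$, and applying it to the producer of each $O_i$ places $O_i$ in $V.A.ownedObjects$, in both cases by the deadline. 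Finally I would rule out premature consumption: as $U$ is honest it issues non-conflicting transactions with consecutive sequence numbers and never spends an object it does not own, so no transaction of $A$ with sequence number below $tx.sn$ consumes any $O_i$ and $tx$ itself is not yet executed, whence no honest $V.A$ removes any $O_i$ before the deadline and $tx.objects \subseteq ownedObjects$ holds.

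The step I expect to be the main obstacle is the bridge in the first paragraph together with the proof-propagation claim in the second: converting the honest user's \emph{local} knowledge at time $t$ into a \emph{timed} guarantee at the validators. \cref{def:emit} only promises that \emph{some} honest validator will \emph{eventually} execute the prerequisites, whereas here I need them finalized by time $t$ for the $\max(t+\Delta, GST+\Delta)$ bound to bite. The subtle case is a decision proof that first reaches $U$ through a Byzantine validator and was never broadcast to the honest validators; making the argument rigorous requires that such a proof, once held by $U$, still forces every honest validator to decide (and hence, by negligible local computation, to execute the whole ancestor chain) within the deadline. Handling the two provenances of the consumed objects — produced by a reactive actor versus by another user actor — uniformly under this single deadline is the remaining delicate point.
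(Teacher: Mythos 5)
Your proposal has a genuine gap, and it is exactly the one you flag yourself: the bridge from $U$'s local knowledge at time $t$ to a timed guarantee at the honest validators. Your plan is to argue that everything $U$ observed before emitting was a \emph{decision} backed by a transferable proof, and then to propagate these proofs via the rebroadcast mechanism underlying \cref{prop:common-termination}. But Common Termination (and its POB analogue) starts its clock only when an \emph{honest validator} decides; it says nothing about a proof held only by the user. As you note, a proof may reach $U$ through a Byzantine validator that never broadcast it to anyone else, and the protocol gives the user no mechanism to force honest validators to decide within $\Delta$ of time $t$ (the user does not attach proofs to $tx$; and even if it did, deciding is not executing --- execution of a UA transaction additionally waits on the ancestors in \cref{line:ua unpending} of \cref{alg:ua transaction}). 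So the step ``decided by $t$ implies executed everywhere by $\max(t+\Delta, GST+\Delta)$'' does not go through as stated, and your proposal leaves it unresolved.

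The paper's proof avoids this entirely with a different bridge: since $U$ is honest, it emits $tx$ only after seeing the \emph{effects} of the prerequisite transactions $tx_1, \ldots, tx_k$, and seeing an effect means having received at least $f+1$ matching confirmations, hence at least one of them comes from an honest validator that had already \emph{executed} $tx_i$ by time $t$. From there, \cref{lem:transaction agreement} --- which is keyed on an honest validator's execution, not on the user's knowledge --- immediately gives that every honest validator executes each $tx_i$ by $\max(t+\Delta, GST+\Delta)$, and both $executed[tx.sn-1]$ and $tx.objects \subseteq ownedObjects$ follow. Your instinct to rest the argument on \cref{lem:transaction agreement,lem:ancestors hence current} is right; what is missing is the $f+1$-confirmation argument converting the user's view into an honest validator's execution at time at most $t$, which is the only timed anchor those catch-up lemmas can work from. (Your final point about premature consumption of the $O_i$ is a reasonable extra observation, though minor; the paper implicitly dismisses it since an honest user never spends the same objects twice.)
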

\begin{proof}
    Since $U$ is honest and emits $tx$, we conclude that $U$ saw effect of some transactions $tx_1, \ldots, tx_k$ that made $executed[A][tx.sn]$ and $tx.objects \subseteq objects[A]$ hold.
    It means that $\forall i \in [k]$ $U$ received at least $f + 1$ votes for $tx_i$, meaning there is at least one honest process which executed $tx_i$, which by \cref{lem:transaction agreement} implies that every honest validator will execute $tx_i$ by $\max(t + \Delta, GST + \Delta)$.
    Therefore, for every honest validator $V.A$ $executed[tx.sn - 1]$ and $tx.objects \subseteq ownedObjects$ will hold by $\max(t + \Delta, GST + \Delta)$.
\end{proof}

\begin{proof}[Reactive Actor Validity]
    First, consider an honest user $U$ issues a UA-RA transaction $tx$ to a reactive actor $X$.
    Since $U$ is honest, checks in Line~\ref{line:ua-ra checks} of \cref{alg:ua-ra transaction} will pass and each honest validator will put $tx$ into its $pending$ set.
    Moreover, by \cref{lem:user ready hence validator ready}, a condition in Line~\ref{line:ua-ra transaction unpending}, \cref{alg:ua-ra transaction} will eventually hold for every honest $V.A$ and it will send $tx$ to $V.X$ (Line~\ref{line:ua-ra transaction send}).
    So, $tx$ will eventually end up in the pool of every honest validator (Line~\ref{line:ua-ra transaction pool}), let's denote this time point with $t$.
    Now consider a point in time after $t$ and after GST when an honest validator $L$ becomes a leader of POA for $X$.
    If by this time $L$ already executed $tx$, then by the Reactive Actor Agreement $tx$ will be eventually executed by every honest validator.
    Otherwise, $L$ will include $tx$ in its proposal and by the (I) part of the Validity property of POA, $tx$ will be decided and then executed (Lines~\ref{line:ua-ra transaction execute code} and \ref{line:ua-ra transaction execute call}).

    Now, we give a prove for an RA-RA transaction from a reactive actor $X$ to a reactive actor $Y$.
    \cref{cor:eventual-emit} states that if for some honest validator $V_1$, $V_1.X$ emitted $tx$, then for every other honest validator $V_2$, $V_2.X$ will eventually emit $tx$.
    Therefore, $tx$ will end up in the pool of $V.Y$ for every honest validator $V$, and by the same logic as with a UA-RA transaction, $tx$ will be eventually executed by every honest validator. 
\end{proof}

\begin{proof}[Reactive Actor Total Order]
    A reactive actor validator executes a transaction only if it decides a block in POA containing this transaction. 
    
    Let $B_1$ be the block containing $tx_1$ and $k_1$ be the number of POA instance in which $B_1$ is decided.
    Analogously we define $B_2$ and $k_2$ for $tx_2$. 
	
    Given $V_1.X$ executed $tx_1$ before $tx_2$ and since processes execute blocks sequentially, we conclude that $k_1 \leq k_2$.
    If now $k_1 < k_2$, then $V_2.X$ executes $tx_1$ before $tx_2$ because of the sequential execution of blocks.
    And if $k_1 = k_2$, then $B_1 = B_2$, and this block is executed in the same order by $V_1.X$ and $V_2.X$ due to the deterministic block execution. 
\end{proof}

\begin{proof}[Reactive Actor Integrity]
    If $tx$ is executed by $V.X$, it means that $V.X$ decided a block containing $tx$.
    $V.X$ can decide either on the Fast Path (Line~\ref{line:poa-decide-optimistic}, \cref{alg:poa}) or on the slow path (Line~\ref{line:poa decide slow}, \cref{alg:poa}). 
	
    If a block containing $tx$ is decided on the fast path, it means it received at least $n - p - f \geq 1$ honest votes (Line~\ref{line:poa-fast-path-condition}), meaning it passed the checks of an honest validator.
    In case $tx$ is a UA-RA transaction, this means that $tx$ is correctly signed by a user (Line~\ref{line:poa check user sign}) and a validator successfully FP-locked $tx$ (Line~\ref{line:poa try fp-lock}), meaning it has $executed[tx.sn - 1] = true$ and $tx.objects \subseteq ownedObjects$ (Line~\ref{line:poa approve fp lock condtion}), therefore $tx$ is emitted.
    In case $tx$ is an RA-RA transaction, it means that it is contained in the pool of an honest validator (Line~\ref{line:poa receive leader's block condition}), hence it was issued by a reactive actor of an honest validator, and hence it is emitted.
	
    If $tx$ is included in the block decided on the slow path, then by the (II) part of Quorum Validity, there are $n - 3f \geq 1$ honest validators who proposed a block containing $tx$ to the Quorum Consensus.
    An honest validator includes $tx$ in its proposal either if it received $n - p - 2f$ votes for the block containing $tx$ (Line~\ref{line:poa propose to qc for fast path}) or if it received $n - 2f$ fallback blocks containing $tx$ (Line~\ref{line:poa propose to qc fallback}).
    In case it received $n -  p - 2f$ votes, there were at least $n - p - 3f \geq p + 1$ honest votes for a block containing $tx$, which by the argument above implies that $tx$ is emitted.
    In case it received $n - 2f$ fallback blocks containing $tx$, it means that at least $n - 3f \geq 1$ honest validators included $tx$ into their fallback block, meaning $tx$ was in their pool, meaning it was either sent by a reactive actor (Line~\ref{line:ua-ra ra send}, \cref{alg:ua-ra transaction}) and thus emitted, or it was sent by a user actor (Line~\ref{line:ua-ra transaction send}, \cref{alg:ua-ra transaction}) and hence checked for a correct user signature (Line~\ref{line:ua-ra checks}) and for $executed[tx.sn - 1] = true$ and $tx.objects \subseteq ownedObjects$ (Line~\ref{line:ua-ra transaction unpending}) and thus emitted.

    Every transaction is executed at most once by a reactive actor since it maintains an $executed$ set and skips a transaction if it was executed already (Line~\ref{line:ua-ra executed check}, \cref{alg:ua-ra transaction}).   
\end{proof}

\begin{proof}[User Actor Validity]
    Consider an emitted transaction $tx$ issued by user $A$ with $tx.sender = A$.
    Given that $A$ is honest, by the Validity Property of Parallel Optimistic Broadcast, $tx$ will eventually be decided in the Parallel Optimistic Broadcast and will end up in a $pending$ set of every honest validator (Line~\ref{line:ua tx to pendig}, \cref{alg:ua transaction}).
    And by \cref{lem:user ready hence validator ready}, for every honest $V.A$ $executed[tx.sn - 1]$ and $tx.objects \subseteq ownedObjects$ will eventually hold (Line~\ref{line:ua unpending}) and thus $V.A$ will execute $tx$ (Line~\ref{line:ua tx execute}).
\end{proof}

\begin{proof}[User Actor Integrity]
    A correct validator executes a transaction only if it previously put it into a $pending$ set (Line~\ref{line:ua unpending}, \cref{alg:ua transaction}).
    A correct validator puts a transaction into a $pending$ set only if it delivers it in a Parallel Optimistic Broadcast instance (Line~\ref{line:ua tx put in pending}).
    By the Integrity property of the Parallel Optimistic Broadcast, at most one transaction with a given sequence number will be decided in the given instance, and hence, at most one transaction with a given sequence number will be executed.

    Furthermore, a transaction is put in the $pending$ set only if it is correctly signed and is executed only if the condition in Line~\ref{line:ua unpending} holds, meaning only if a transaction is emitted. 
\end{proof}

\begin{proof}[User Actor Agreement]
    We consider all possible cases for $tx_1$ and $tx_2$ to be either UA or UA-RA transactions. 

    In case both $tx_1$ and $tx_2$ are UA transactions, we can conclude that $V_1.A$ and $V_2.A$ delivered them in the same Parallel Optimistic Broadcast instance, hence $tx_1 = tx_2$ by the Agreement property of Parallel Optimistic Broadcast. 

    If $tx_1$ and $tx_2$ are both UA-RA transactions, we conclude that they were both decided in the POA and hence, due to the No Conflict property of POA, $tx_1 = tx_2$.

    Finally, it can not be that $tx_1$ is a UA transaction, and $tx_2$ is a UA-RA transaction. That is because, given Agreement properties of Parallel Optimistic Broadcast and POA, we can assume that both $tx_1$ and $tx_2$ were decided in the slow path, that is in the instance number $tx_1.sn$ (same as $tx_2.sn$) of Transaction Agreement of $A$. But then $tx_1 = tx_2$ by the Agreement property of Transaction Agreement.
    
\end{proof}

\begin{proof}[Fast UA Transaction Execution]
    Each honest validator will terminate the Byzantine Reliable Broadcast of $tx$ in two communication steps.
    And by \cref{lem:user ready hence validator ready}, every honest validator will be ready to execute $tx$ by that time.
\end{proof}

\begin{proof}[Fast UA-RA Transaction Execution]
    The first communication step is a user's broadcast of $tx$, so, in particular, $L$ receives it.
    Then we apply the Fast Termination property of POA which adds two additional communication steps.
\end{proof}





\begin{proof}[Multi-termination]
    Due to the Termination property of POA, it's sufficient to show that $V.X$ will initiate $k$-th instance. This we prove by induction on $k$. 

    The first POA instance is initiated immediately when the reactive actor entity is initialized.
    Now, assume a validator decides in the $k-1$ instance.
    This would trigger a $poa.Decide(k-1, \cdot)$ event (Line~\ref{line:ua-ra tx poa decide}, \cref{alg:ua-ra transaction}) and thus a $poa.Initiate(k)$ event (Line~\ref{line:ua-ra tx poa init}).
\end{proof}

\end{document}